\newif\ifcompileappendix
\newlength{\additionaltextwidth}
\newtheorem{example}{Example}
\newtheorem{proposition}{Proposition}
\newtheorem{theorem}{Theorem}
\newtheorem{lemma}{Lemma}
\newtheorem{corollary}{Corollary}
\newtheorem{observation}{Observation}
\newtheorem{remark}{Remark}
\newcommand{\amax}{\ensuremath{a_{\max}}}
\DeclareMathOperator*{\argmax}{arg\,max}
\crefname{section}{Section}{Sections}
\crefname{subsection}{Subsection}{Subsections}
\crefname{table}{Table}{Tables}
\crefname{figure}{Figure}{Figures}
\crefname{algorithm}{Algorithm}{Algorithms}
\crefname{theorem}{Theorem}{Theorems}
\crefname{definition}{Definition}{Definitions}
\crefname{corollary}{Corollary}{Corollaries}
\crefname{proposition}{Proposition}{Propositions}
\crefname{observation}{Observation}{Observations}
\crefname{lemma}{Lemma}{Lemmas}
\crefname{example}{Example}{Examples}
\crefname{reduction}{Reduction}{Reductions}
\crefname{algorithm}{Algorithm}{Algorithms}
\crefname{appendix}{Appendix}{Appendices}
\newcommand{\p}{{\mathsf{P}}}
\newcommand{\np}{{\mathsf{NP}}}
\newcommand{\coNP}{{\mathsf{coNP}}}
\newcommand{\fpt}{{\mathsf{FPT}}}
\newcommand{\wtwo}{{\mathsf{W[2]}}}
\newcommand{\SetCover}{\textsc{Set Cover}\xspace}
\newcommand{\SCS}{\ensuremath{\mathcal{F}}\xspace}
\newcommand{\SCU}{\ensuremath{U}\xspace}
\newcommand{\SCk}{\ensuremath{k}\xspace}
\newcommand{\SCm}{\ensuremath{m}\xspace}
\newcommand{\SCn}{\ensuremath{n}\xspace}
\newcommand{\TPG}{\textsc{TPG}\xspace}
\newcommand{\thresholdnumber}{$t$-\textsc{Threshold Covering}\xspace}
\newcommand{\hornsystem}{\textsc{Horn Constraint System}\xspace}
\newcommand{\largeteams}{\ell}
\newcommand{\mytitle}{Teams in Online Scheduling Polls: Game-Theoretic Aspects}
\newcommand{\myparagraph}[1]{\medskip\noindent\textbf{{#1}.}}
\begin{document}

\newcommand{\AND}{}

\title{\mytitle\thanks{R.\ Bredereck is supported by the DFG fellowship BR 5207/2. N. Talmon is supported by an I-CORE ALGO fellowship.
Main work done while R.\ Bredereck and N. Talmon were with TU Berlin.}}

\author{Robert Bredereck$^1$,  Jiehua Chen$^2$, Rolf Niedermeier$^2$,
  Svetlana Obraztsova$^3$, and Nimrod~Talmon$^4$\\
   $^1$University of Oxford, United Kingdom, \texttt{robert.bredereck@cs.ox.ac.uk}\\
  $^2$TU Berlin, Germany, \texttt{\{jiehua.chen, rolf.niedermeier\}@tu-berlin.de}\\
  $^3$I-CORE, Hebrew University of Jerusalem, Israel, \texttt{svetlana.obraztsova@gmail.com}\\
  $^4$Weizmann Institute of Science, Israel, \texttt{nimrodtalmon77@gmail.com}
}

\maketitle

\begin{abstract}
Consider an important meeting to be held in a team-based organization.
Taking availability constraints into account,
an online scheduling poll is being used in order to decide upon the exact time of the meeting.
Decisions are to be taken during the meeting,
therefore each team would like to maximize its relative attendance in the meeting
(i.e., the proportional number of its participating team members).
We introduce a corresponding game,
where each team can declare (in the scheduling poll) a lower total availability,
in order to improve its relative attendance---the pay-off.
We are especially interested in situations where teams can form coalitions.

We provide an efficient algorithm that,
given a coalition, finds an optimal way for each team in a coalition to improve its pay-off.
In contrast,
we show that deciding whether such a coalition exists is NP-hard.
We also study the existence of Nash equilibria:
Finding Nash equilibria for various small sizes of teams and coalitions can be done in polynomial time 
while it is coNP-hard if the coalition size is unbounded.
\end{abstract}

\section{Introduction}\label{section:introduction}

An organization is going to hold a meeting,
where people are to attend. 
Since people come from different places and have availability constraints,
an open online scheduling poll is taken to decide upon the meeting time. 
Each individual can approve or disapproval of each of the suggested time slots.
In order to have the highest possible attendance,
the organization will schedule the meeting at a time slot with the maximum sum of declared availabilities. 
During the meeting, proposals will be discussed and decisions will be made.
Usually, people have different interests in the decision making, e.g. they are from different teams that each want their own proposals to be put through.
We consider people with the same interest as from the same \emph{team} 
and as a result, each team (instead of each individual) may declare (in the scheduling poll) the number of its members that can attend the meeting at each suggested time slot.

For a simple illustration, suppose that three teams, $t_1$, $t_2$, and $t_3$, 
are about to hold a meeting, either at 9am or at 10am. 
Two members from~$t_1$,
one member from~$t_2$,
and three members from~$t_3$
are available at 9am,
while exactly two members of each team are available at 10am.
The availabilities of the teams can be illustrated as an integer matrix:
\begin{align*}
  A\coloneqq 
  \begin{blockarray}{ccc}
    c_1 & c_2 & \\ [3px]
    \begin{block}{(cc)c}
      2 & 2 & \ \ t_1 \\
      1 & 2 & \ \ t_2 \\
      3 & 2 & \ \ t_3  \\
     \end{block}
   \end{blockarray}
\end{align*}

A time slot is a winner if it receives the maximum sum of declared availabilities.
Thus, if the three teams declare their true availabilities,
then both 9am and 10am co-win (since six people in total are available at 9am and 10am each), 
and the meeting will be scheduled at either 9am or 10am.

Now, if a team (i.e. people with the same interest) wants to influence any decision made during the meeting, 
then it will want to send as many of its available team members to the meeting as possible
because this will maximize its \emph{relative power}---the proportion of its own attendees.
For our simple example, if the meeting is to be held at 9am, 
then the relative powers of teams~$t_1$, $t_2$, and $t_3$ are $1/3$, $1/6$, and $1/2$, respectively.
This again means that a sophisticated team will change its availabilities declared in the poll from time to time.
However, each team must not report a number which is higher than its true availability
since it cannot send more members than available. 
Given this constraint, it is interesting to know whether any team can increase its relative attendance by misreporting about its availability.

Aiming to have more power in the meeting,
our teams can declare a different number than their true availabilities,
possibly changing the winning time slot to one where their relative powers are maximized.
For the case where several time slots co-win, however, 
it is not clear which co-winning time slot will be used.
To be on the safe side, 
the teams must maximize the relative power of \emph{each} co-winning time slot.
In other words, our teams are pessimistic and consider their
\emph{pay-off} as the \emph{minimum} over all the relative powers at each co-winning time slot.
In our example, 
this means that the pay-off of team~$t_2$ would be $1/6$,
since this is its relative power at 9am, which is smaller than its relative power, $1/3$, at 10am.
The pay-offs of teams~$t_1$ and $t_3$ are both~$1/3$.
In this case,
team~$t_2$ can be strategic by updating its availability and declare zero availability at 9am;
as a result, the meeting would be held at 10am, where team~$t_2$ has better pay-off with relative number of $1/3$.

We do not allow arbitrary deviations from the real availabilities of teams;
specifically,
we do not allow a team to declare as available a higher number than actually available.
Further,
we do not allow a team to send more team members to the meeting than it declared as available,
because this is often mandated by the circumstances.
As some examples,
we mention that the organizer might need to arrange a meeting room and specify the number of participants in the meeting up-front
(similarly, if the meeting is to be carried in a restaurant, the number of chairs at the table shall be decided beforehand);
or the organizer might need to obtain buses to transport the participants.
Thus, the teams must send exactly the declared number of members to the meeting.
For instance, it is not possible for team~$t_2$ to declare $3$ at 9am since only one of its team members is available.
A formal description of the corresponding game, called \emph{team power game} (\TPG, in short),
and a discussion on our example are given in~\cref{section:preliminaries}.

As already remarked, 
to improve the pay-off, a team may lie about the number of its available members. 
Sometimes, teams can even form a coalition and update their availabilities strategically. 
In our example, after team~$t_2$ misreported its availabilities such that each team receives a pay-off of $1/3$,
teams~$t_1$ and $t_3$ may collaborate: if both teams keep their declared availabilities at 9am
but declare zero availability at 10am (note that team~$t_2$ does not change its updated availabilities), 
then 9am will be the unique winner (with total availability of $5$);
as a result, $t_1$ and $t_3$ receive better pay-offs of $2/5$ and $3/5$, respectively. 
Such a successful deviation from the declared availabilities of the teams in a coalition 
(while keeping the declared availabilities of the teams not in the coalition unchanged)
is called an \emph{improvement step}.

After some teams perform an improvement step,
other teams may also want to update their availabilities to improve.
This iterative process leads to the question of whether
there is a stable situation, where improvement is impossible---a \emph{Nash equilibrium}.
Of course, when searching for equilibria, 
it is natural to ask how hard it is to decide whether an improvement step is possible. 

In this paper,
we are interested in the computational complexity of the following problems:
(1) finding an improvement step (if it exists) for a specific coalition,
(2) finding an improvement step (if it exists) for any coalition, and 
(3) finding a \emph{$t$-strong Nash equilibrium} (if it exists).

\newcommand{\tablefootnote}[1]{\footnote{#1}}

\iftrue
{
\begin{table}[t]
\caption{Complexity results for the team power game. 
  ``Unary'' (resp.\ ``Binary'') means 
  that the input and the strategy profiles are encoded in unary (resp.\ binary).
  Variable~$t$ stands for the number of teams in a coalition,
  while~$a_{\max}$ stands for the maximum true availability. 
  An entry labeled with ``$\p$'' means polynomial-time solvability.
  An entry labeled with ``$\fpt$ for~$k$'' means solvability in $f(k)\cdot |I|^{O(1)}$ time, 
  where $f$ is a function solely depending on $k$ and $|I|$ denotes the size of the given input.
  An entry labeled with ``$\wtwo$-hard for~$k$''
  implies that the corresponding problem is not ``$\fpt$ for $k$'' unless $\wtwo=\fpt$ (this is considered unlikely in parameterized complexity theory).
}
{\center
\begin{tabular}{@{}l@{}ll@{}}
\toprule
(1) & \multicolumn{2}{l}{Finding an improvement step for a given coalition} \\
& \qquad Unary & in $\p$~(Thm.~\ref{thm:unary-p}) \\
& \qquad Binary & in $\fpt$ for $t$~(Thm.~\ref{thm:binary-unbounded})$^1$\\[1ex]
(2) & \multicolumn{2}{l}{Deciding the existence of an improvement step}\\
    & \multicolumn{2}{l}{for any coalition}\\
& \qquad Binary & in $\p$ for constant~$t$ (Cor.~\ref{cor:poly})\\
& \qquad $a_{\max}= 1$ &  $\np$-complete~(Thm.~\ref{thm:cvc}) \\
& \qquad $a_{\max}= 1$ &  $\wtwo$-hard for $t$ (Thm.~\ref{thm:cvc}) \\[1ex]
(3) & \multicolumn{2}{l}{Finding a $1$-strong Nash equilibrium} \\
& \qquad $a_{\max} \le 3$ & in $\p$, always exists (Thm.~\ref{thm:simple_Nash_always_exists})\\
& \qquad $a_{\max} \ge 4$ & {\color{darkgray}open} (Rem.~\ref{rem:1-NE})\\[.8ex]
 & \multicolumn{2}{l}{Finding a $2$-strong Nash equilibrium} \\
& \qquad $a_{\max} = 1$ & in $\p$, always exists (Prop.~\ref{prop:two_Nash_always_exists})\\
& \qquad $a_{\max} \geq 2$ & {\color{darkgray}open}, does not always exist\\[.8ex]
 & \multicolumn{2}{l}{Deciding the existence of a $t$-strong Nash equilibrium} \\
& \qquad $a_{\max} = 2$ & $\coNP$-hard (Thm.~\ref{thm:strongNash-coNP-hard})\\
\bottomrule
\end{tabular}
\par}

$^1$\footnotesize{We conjecture it to be even in $\p$. Strong $\np$-hardness is excluded by Theorem~\ref{thm:unary-p}.}
\label{table:results}
\end{table}
}
\fi


\myparagraph{Main Contributions}
We show that, depending on the size of the coalition
(i.e., the amount of collaboration allowed),
the computational complexity of finding an improvement step for a given coalition 
and deciding whether an improvement step exists for an arbitrary coalition  
ranges from being polynomial-time solvable to being $\np$-hard; 
further,
deciding whether an improvement step exists for any coalition of size at most~$t$ is $\wtwo$-hard
when parameterizing by the coalition size~$t$.
We show that a $1$-strong Nash equilibrium always exists for some special profiles
and we provide a simple polynomial-time algorithm for finding it in these cases.
Finally, we show that deciding whether a $t$-strong Nash equilibrium exists is $\coNP$-hard.
Our results are summarized in~\cref{table:results}.

\myparagraph{Related Work}
Recently,
online scheduling polls such as Doodle~/~Survey Monkey caught the attention of several researches.
\citet{RNBNG13} initiated empirical investigations of scheduling polls and
identified influences of national culture on people's scheduling behavior,
by analyzing actual Doodle polls from 211~countries.
\citet{zou2015strategic} also analyzed actual Doodle polls,
and devised a model to explain their experimental findings.
They observed that people participating in open polls
tend to be more ``cooperative'' and additionally approve time slots that are very popular or unpopular;
this is different to the behavior of people participating in closed polls. 
\citet{OEPR15} formally modeled the behavior observed by \citet{zou2015strategic} as a game,
where approving additional time slots may result in pay-off increase.
While the game introduced by \citet{OEPR15} captures the scenario that each individual player tries to appear to be cooperative,
our team power game models the perspective that each individual team (player) as a whole tries to maximize
its relative power in the meeting, which means that approving more time slots is not necessarily a good strategy.

Quite different in flavor,
\citet{lee2014algorithmic} considered a computational problem 
from the point of view of the poll initiator,
whose goal is to choose the time slots to poll over,
in order to optimize a specific cost function.
Finally,
since scheduling polls might be modeled as approval elections,
we mention 
the vast amount of research done on approval elections in general, e.g.,~\cite{brams1978approval}
and on iterative approval voting in particular, e.g.,~\cite{DORK15,lev2012convergence,MPRJ10}.

\newcommand{\cowinners}{\mathsf{winners}}
\newcommand{\teampower}{\mathsf{team}\text{-}\mathsf{power}}
\newcommand{\payoff}{\mathsf{pay}\text{-}\mathsf{off}}
\newcommand{\sumpayoffA}{\mathsf{s}_A}
\newcommand{\sumpayoff}{s}

\section{Preliminaries}\label{section:preliminaries}

We begin this section by defining the rules of the game which is of interest here.
Then,
we formally define the related computational problems we consider in this paper.
Throughout, given a number~$n\in \mathds{N}$,
by $[n]$ we mean the set~$\{1,2,\ldots, n\}$.

\myparagraph{Rules of the Game}
The game is called the \emph{team power game} (\TPG, in short).
It consists of $n$ players,
the \emph{teams},
$t_1,t_2,\ldots,t_n,$
and $m$ possible \emph{time slots},
$c_1,c_2,\ldots,c_m$. 
Each team $t_i$ is associated with a \emph{true availability vector}
%
$A_i = (a_i^1, a^2_i, \ldots,a_i^m),$ 
%
\noindent where $a_i^j \in \mathds{N}$ is the (true) availability of team~$t_i$ for time slot~$c_j$.
Importantly, each team is only aware of its own availability vector.
During the game,
each team~$t_i$ announces a \emph{declared (availability) vector}
$B_i=(b^1_i,b^2_i,\ldots,b^m_i),$
%
\noindent where $b_i^j\le a_i^j$ is the declared availability of team~$t_i$ for time slot~$c_j$;
using standard game-theoretic terms,
we define the \emph{strategy} of team~$t_i$ to be its declared availability vector~$B_i$.
We use~$A$ and~$B$ to denote the matrices consisting of a row for each team's true and declared availability vectors. 
That is, for $i \in [n]$ and $j \in [m]$, $A \coloneqq (a^j_i), B \coloneqq (b^j_i)$.
Given a declared availability matrix~$B$,
the co-winners of the corresponding scheduling poll, 
denoted as $\cowinners(B)$,
are the time slots with the maximum sum of declared availabilities: 
%
\begin{align*} 
\cowinners(B) \coloneqq \argmax\limits_{c_j \in \{c_1,c_2\ldots,c_m\}} \{\sum_{i \in [n]} b_i^j\}\text{.}
\end{align*}

Before we define the \emph{pay-off} of each team,
we introduce the notion of \emph{relative power}.
The relative power $\teampower(B,t_i,c_j)$ of team $t_i$ at time slot~$c_j$
equals the number of members from $t_i$
who will attend the meeting at time slot $c_j$,
divided by the total number of attendees at this time slot:
\begin{align*}
\teampower(B,t_i,c_j) \coloneqq \frac{b_i^j}{\sum_{k\in [n]} b_k^j}\text{.}
\end{align*}


In order to define the pay-off of each team,
we need to decide how to proceed when several time slots tie as co-winners.
In this paper we consider a \emph{maximin} version of the game,
where ties are broken adversarially.
That is, the pay-off of team $t_i$ is defined to be the minimum,
  over all co-winners,
  of its \emph{relative power}:
%
\begin{align*}
\payoff(B,t_i) \coloneqq \min\limits_{c_j \in \cowinners(B)} \teampower(B,t_i,c_j)\text{.}
\end{align*}


When we refer to an \emph{input} for \TPG,
we mean a true availability matrix~$A\in \mathds{N}^{n\times m}$
where 
each row~$A_i$ represents the true availability of a team~$t_i$ for the $m$ time slots.
When we refer to a \emph{strategy profile} (in short, \emph{strategy}) for input~$A$ 
we mean a declared availability matrix~$B\in \mathds{N}^{n\times m}$ 
where
each row~$B_i$ represents the declared availability vector of team~$t_i$.
%

\myparagraph{Computational Problems Related to the Game}
Given a \emph{coalition}, i.e., a subset of teams,
a deviation of the teams in the coalition from their current strategy profile is an \emph{improvement step}
if,
by this deviation,
each team in the coalition strictly improves its pay-off.
Given a positive integer~$t$,
a \emph{$t$-strong Nash equilibrium} for some input $A$ is a strategy profile $B$
such that no coalition of at most $t$ teams has an improvement step wrt.~$B$.
We are interested in the following computational questions:
%

1. Given an input, a strategy profile, and a coalition of at most $t$ teams,
  does this coalition admit an improvement step compared to the given strategy profile?

2. Given an input, a strategy profile, and a positive integer~$t$,
  is there any coalition of at most $t$ teams which has an improvement step compared to the given strategy profile?

3. Given an input and a positive integer~$t$,
  does a \emph{$t$-strong Nash equilibrium} for this input exist?


We are particularly interested in understanding the dependency of the computational complexity of the above problems
on the number $t$ of teams in a coalition.
Specifically,
we consider
(1) $t$ being a constant (modeling situations where not too many teams are willing to cooperate or where cooperation is costly) and
(2) $t$ being unbounded.


\myparagraph{Illustrating Example}
Consider the input matrix~$A$ given in Section~\ref{section:introduction}, 
which specifies the true availabilities of three teams~$t_1,t_2,t_3$ over two time slots~$c_1,c_2$.
%
If all teams declared their true availabilities,
then both time slots win with total availability $6$.
The pay-offs of the teams~$t_1,t_2,t_3$ are $1/3,1/6,1/3$, respectively.
Team~$t_2$ can improve its pay-off by declaring~$(0,2)$ (i,e., declaring $0$ for $c_1$ and $2$ for $c_2$).
As a result, $c_2$ would become the unique winner with total availability~$6$
and team~$t_2$ would receive a better pay-off: $1/3$.
%
%
\noindent Thus, the profile $B$ for~$A$ where all teams declare their true availabilities
(i.e., where $B = A$)
is not a $1$-strong Nash equilibrium.
Nevertheless, $A$ does admit the following $1$-strong Nash equilibrium:
\begin{align*} 
B'\coloneqq \begin{pmatrix}
    2 & 2 \\
    1 & 2 \\
    3 & 0
  \end{pmatrix}
\end{align*}

The declared availability matrix~$B'$,
however,
is not a $2$-strong Nash equilibrium,
since if team~$t_1$ and~$t_2$ would form a coalition and declare the same availability vector~$(0,2)$,
then $c_2$ would be the unique winner with total availability $4$ 
and both~$t_1$ and $t_2$ would have a better pay-off: $1/2$.

\section{Improvement Steps}\label{section:results_improvement_steps}

We begin with the following lemma, 
which basically says that,
in search for an improvement step, 
a fixed coalition of teams 
needs only to focus on a single time slot.

\begin{lemma}\label{one_is_enough}
  %
  %
  If a coalition has an improvement step wrt.\ a strategy profile~$B$,
  then it also has an improvement step~$E=(e^j_i)$ wrt.\ $B$,
  where there is one time slot~$c_k$ such that 
  each team~$t_i$ in the coalition declares zero availability for all other time slots
  (i.e., $e^{j}_{i} = 0$ holds for each team~$t_i$ in the coalition and each time slot $c_{j}\neq c_k$).
\end{lemma}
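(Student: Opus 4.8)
The plan is to take an arbitrary improvement step and \emph{concentrate} it onto a single co-winning time slot, then argue that the concentrated deviation is still an improvement step of the required form. So fix the coalition $\calS$ and an arbitrary improvement step $D=(d^j_i)$ wrt.\ $B$; by definition $d^j_i=b^j_i$ for every $t_i\notin\calS$, $d^j_i\le a^j_i$ throughout, and $\payoff(D,t_i)>\payoff(B,t_i)$ for each $t_i\in\calS$. The crucial preliminary observation is that each coalition team must end up with \emph{strictly positive} pay-off under $D$: relative powers are non-negative, so $\payoff(B,t_i)\ge 0$ and hence $\payoff(D,t_i)>0$. Since the pay-off is the minimum of the relative powers taken over all co-winners, positivity forces $d^j_i>0$ for every $t_i\in\calS$ and every $c_j\in\cowinners(D)$.

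Next I would pick an arbitrary co-winner $c_k\in\cowinners(D)$ and define $E=(e^j_i)$ by keeping each coalition team's declaration at $c_k$ ($e^k_i\coloneqq d^k_i$ for $t_i\in\calS$), zeroing it everywhere else ($e^j_i\coloneqq 0$ for $t_i\in\calS$ and $c_j\neq c_k$), and leaving every non-coalition team unchanged. This $E$ has precisely the shape demanded by the lemma, and it is valid since its entries never exceed those of $D$ (hence of $A$). The main step---and where I expect the real work---is to show that $c_k$ becomes the \emph{unique} winner under $E$. Writing $T_j\coloneqq\sum_{l}d^j_l$ for the total declared availability at $c_j$ under $D$, slot $c_k$ retains its total $T_k$ because no team alters its declaration there, while every other slot $c_j$ loses exactly $\sum_{l\in\calS}d^j_l$. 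For a rival co-winner $c_j\in\cowinners(D)\setminus\{c_k\}$ this loss is strictly positive (each $d^j_l>0$ for $l\in\calS$ by the observation above, and $\calS\neq\emptyset$), so its total drops strictly below $T_k$; a non-co-winner was already below $T_k$ and can only decrease further. Hence $c_k$ is the unique winner of $E$. This resolves the one genuine danger: that zeroing out the coalition's declarations could leave a tie sustained entirely by outside teams---a slot at which the coalition would have relative power $0$ and thus pay-off $0$---which the positivity-of-pay-off observation is exactly what rules out.

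Finally, since $c_k$ is the unique winner and the coalition's declaration there is unchanged (so in particular $T_k\ge d^k_i>0$), the pay-off of each $t_i\in\calS$ under $E$ equals its relative power at $c_k$, which is at least the minimum of its relative powers over all co-winners of $D$:
\[
\payoff(E,t_i)=\teampower(E,t_i,c_k)=\frac{d^k_i}{T_k}=\teampower(D,t_i,c_k)\ge\payoff(D,t_i)>\payoff(B,t_i).
\]
Thus every coalition team strictly improves, so $E$ is an improvement step of the desired form.
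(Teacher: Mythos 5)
Your proof is correct and rests on the same key observation as the paper's own proof: since each coalition team strictly improves under $D$, its relative power at \emph{every} co-winner of $D$ strictly exceeds its old pay-off, so the deviation can be concentrated onto a single co-winning slot. The only difference is presentational---the paper zeroes out the superfluous slots one at a time in an iterative argument, whereas you build the final profile $E$ in one shot, and your explicit positivity observation ($d^j_i>0$ at every co-winner) makes the claim that $c_k$ becomes the unique winner fully airtight.
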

The missing proof for \cref{one_is_enough} can be found in \cref{proof:one_is_enough}.

By \cref{one_is_enough},
we know that if a fixed coalition has an improvement step for a strategy profile~$B$,
then it admits an improvement step that involves only one time slot.
Assume that it is time slot $c_k$.
In order to compute an improvement step for the coalition,
we first declare zero availabilities for the teams in the coalition,
for all other time slots.
Then,
we have to declare specific availabilities for the teams in the coalition,
for time slot $c_k$.
This is where the collaboration between the teams comes into play:
even though each team,
in order to improve its pay-off,
might wish to declare as high as possible availability for time slot $c_k$
(i.e., its true availability),
the teams shall collaboratively decide on the declared availabilities,
since a too-high declared availability for one team might make it impossible for another team
(even when declaring the maximum possible amount, i.e., the true availability)
to improve its pay-off.
It turns out that this problem is basically equivalent to the following problem
(which, in our eyes, is interesting also on its own).

\myparagraph{Relation to Horn Constraint Systems} 
%
\iftrue
Using \cref{one_is_enough},
we know that if a coalition of $t$ teams admits an improvement step for 
a strategy profile~$B$,
then it admits an improvement step that involves only one time slot.
Let $c_k$ be such a time slot.
Then, finding an improvement step that involves only time slot~$c_k$ 
reduces to the following number problem.
\fi
given a natural number vector~$(a_1,a_2,\ldots,a_{t})\in \mathds{N}^t$, 
 a rational number vector~$(p_1,p_2,\ldots,p_t)\in \mathds{Q}^{t}$
 with $\sum_{i\in[t]}p_i\le 1$,
 and a natural number~$p\in \mathds{N}$,
 searches for a natural number vector~$(x_1,x_2,\ldots,x_t)$ where for each $i\in [t]$ the following holds:
  \begin{inparaenum}[(1)]
    \item $1\le x_i \le a_{i}$ and
    \item $x_i/(p+\sum_{i\in[t]}x_i) > p_i$.
  \end{inparaenum}
%
%
%

Intuitively,
the vector $(a_1,\ldots, a_t)$ corresponds to the true availabilities of the teams in the coalition in time slot $c_k$,
while the solution vector $(x_1, \ldots, x_t)$ corresponds to the declared availabilities of the teams in the coalition in time slot $c_k$;
accordingly,
the first constraint makes sure that each declared availability is upper-bounded by its true availability.
Further,
the vector $(p_1, \ldots, p_t)$ corresponds to the current pay-offs of the teams in the coalition,
while $p$ corresponds to the sum of the declared availabilities of the teams not in the coalition at time slot $c_k$;
accordingly,
the second constraint makes sure that,
for each team in the coalition,
the new pay-off is strictly higher than its current pay-off.
More formally,
we argue that the coalition~$\{t_1,t_2,\ldots,t_t\}$ has an improvement step compared to strategy profile~$B$,
involving only time slot~$c_k$,
if and only if 
the instance $(A^*, P, p)$
for \thresholdnumber has a solution,
where $A^* \coloneqq (a^{k}_1,\ldots,a^{k}_t)$, $P \coloneqq (\payoff(B,t_1),\ldots,\payoff(B,t_t))$, and $p \coloneqq \sum_{i \in [n]\setminus [t]}b^{k}_i$.


\begin{remark}
Since the values~$p_i$ ($i \in [t]$) are rational numbers,
we can rearrange the second constraint in the description of \thresholdnumber
to obtain an integer linear feasibility problem.
This means that 
\thresholdnumber is a special variant of the so-called \hornsystem problem which,
given a matrix~$U=(u_{i,j})\in \mathds{R}^{n'\times m'}$ with each row having at most one positive element, 
a vector~$b \in \mathds{R}^{n'}$, and a positive integer~$d$, 
decides the existence of an integer vector~$x\in \{0,1,\ldots,d\}^{m'}$ such that $U\cdot x \ge b$;
\hornsystem is weakly $\np$-hard and can be solved in pseudo-polynomial time~\cite{Lagarias1985,Lagarias1985}.
\end{remark}

%
%

Taking a closer look at \thresholdnumber, 
we observe the following: 
if we would know the sum of the variables $(x_1, \ldots, x_t)$,
then we would be able to directly solve our problem by checking every constraint and taking the smallest feasible value
(i.e., given $\sum_{i \in [t]} x_i$, we would set each $x_i$ to be the minimum over all values satisfying 
$x_i / (p + \sum_{i\in[t]}x_i) > p_i$).
This yields a simple polynomial-time algorithm for finding an improvement step for the likely case
where all availabilities are polynomially upper-bounded in the input size;
technically, this means where the input profile~$A$ is encoded in unary.

\begin{theorem}\label{thm:unary-p}
  Consider an input~$A$ and a strategy profile~$B$. 
  Let $\sumpayoff$ be the sum of all entries in~$A$.
  Finding an improvement step (if it exists) for a given coalition is solvable in $O(s^2)$~time.
\end{theorem}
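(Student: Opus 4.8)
The plan is to combine the structural reduction already established with the ``guess-the-sum'' idea sketched before the theorem, and then bound the running time carefully to reach $O(\sumpayoff^2)$. By \cref{one_is_enough} it suffices to search for an improvement step concentrated on a single time slot: the coalition declares zero availability everywhere except at one slot $c_k$. So the algorithm iterates over all candidate slots $c_k$, and for each one tries to compute declared values $x_i = b^k_i$ (for the coalition teams $t_i$) solving the associated \thresholdnumber instance with $a_i \coloneqq a^k_i$, $p_i \coloneqq \payoff(B,t_i)$, and $p \coloneqq \sum_{i\notin\text{coalition}} b^k_i$. In addition to the two threshold constraints, the algorithm must enforce that $c_k$ becomes the \emph{unique} winner: since each coalition team declares zero outside $c_k$, its relative power at any other co-winner would be $0$, so a strict payoff improvement is possible only if $c_k$ alone attains the maximum total. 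Writing $V_j$ for the total declared availability at $c_j$ after zeroing the coalition, this amounts to requiring the new total $T \coloneqq p + \sum_i x_i$ at $c_k$ to exceed $W_k \coloneqq \max_{j\neq k} V_j$.

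First I would make the guessing idea precise: for a fixed $c_k$, iterate over every candidate value of the total $T$. Because $T$ is a sum of declared (hence true-upper-bounded) availabilities in a single column, it never exceeds the column sum $S_k \coloneqq \sum_i a^k_i$, so there are at most $S_k$ values to try. Once $T$ is fixed, the constraint $x_i/T > p_i$ is equivalent to $x_i > p_i T$, whose smallest integer solution is $\underline{x_i}(T) \coloneqq \lfloor p_i T\rfloor + 1$. Setting each $x_i$ to this minimum is without loss of generality by monotonicity: with the denominator $T$ frozen, any larger value still satisfies the constraint. Hence for the guessed $T$ the slot is feasible exactly when (i) $\underline{x_i}(T) \le a^k_i$ for every coalition team, (ii) the target total is achievable, i.e.\ $\sum_i \underline{x_i}(T) \le T - p \le \sum_i a^k_i$ (each $x_i$ ranges independently in $[\underline{x_i}(T), a^k_i]$, so $\sum_i x_i$ sweeps the whole integer interval), and (iii) $T > W_k$. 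Whenever all three hold, an explicit witness is reconstructed by starting from the minima $\underline{x_i}(T)$ and greedily raising individual $x_i$ up to $a^k_i$ until the total reaches $T-p$.

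For correctness I would invoke the equivalence with \thresholdnumber stated above together with condition (iii), plus the observation that the minimal choice $\underline{x_i}(T)$ loses nothing; if no slot $c_k$ and no value $T$ yields feasibility, then by \cref{one_is_enough} no improvement step exists at all. For the running time, the decisive point is that the candidate totals $T$ tried across \emph{all} slots number at most $\sum_k S_k = \sumpayoff$, the sum of all entries of $A$. Each guessed $T$ costs $O(t)$ time to form $\sum_i \underline{x_i}(T)$ and to test the per-team bounds, while the auxiliary quantities $V_j$, $W_k$, and the payoffs $\payoff(B,t_i)$ are precomputed once within the same budget. Since teams (and columns) that are identically zero can be discarded, we may assume $t \le n \le \sumpayoff$, so the overall cost is $O(\sumpayoff \cdot t) \subseteq O(\sumpayoff^2)$.

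I expect the main obstacle to be the running-time accounting rather than the correctness: the naive ``for each slot, for each $T$, do $O(t)$ work'' bound is $O(m\cdot \sumpayoff \cdot t)$, and the clean $O(\sumpayoff^2)$ guarantee comes only from noticing that the per-slot guessing range is capped by the column sum $S_k$ and that $\sum_k S_k = \sumpayoff$ telescopes the $m$ factor away. A secondary subtlety is the unique-winner condition (iii), which is easy to overlook because it is not part of the bare \thresholdnumber formulation, yet it is exactly what makes the deviation change which slot wins.
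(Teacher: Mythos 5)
Your proposal is correct and follows essentially the same route as the paper's proof: reduce to a single time slot via \cref{one_is_enough}, guess the coalition's total declared availability at that slot, take the minimal per-team values meeting the payoff thresholds, and check feasibility, with the same $O(s^2)$ accounting. If anything, you are more explicit than the paper on two points it glosses over --- enforcing that $c_k$ actually becomes the unique winner (condition (iii)), and padding the minimal values up to the guessed total so that the realized denominator matches the guess.
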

The proof for \cref{thm:unary-p} can be found in \cref{proof:unary-p}.

Indeed,
\thresholdnumber can be reduced to finding the sum $\sum_{i \in [t]} x_i$.
If the input is encoded in binary,
however, 
then this sum might be exponentially large in the number of bits that encode our input,
thus we cannot simply enumerate all possible values.
If the coalition size~$t$ or a certain parameter~$\largeteams$ that measures the number of ``large'' true availabilities
is a constant,
then we still have polynomial-time algorithms for which the degree of the polynomial in the running time does not depend on the specific parameter value.
Specifically, by the famous Lenstra's theorem~(\cite{Len83}, later improved by~\citet{Kan87} and by~\citet{FraTar1987}),  we have the following result. 


\begin{theorem}\label{thm:binary-unbounded}
  Consider an input~$A$ and a strategy profile~$B$.
  Let $L$ be the length of the binary encoding of $A$.
  For each of the following times~$T$, there is a $T$-time algorithm 
  that finds an improvement step, compared to~$B$, for a given coalition of $t$ teams:
  \begin{enumerate}
    \item $T = O(t^{2.5t+o(t)} \cdot L^2)$ and
    \item for each constant value~$c$,
    $T = f(\largeteams_c) \cdot t^2 \cdot L^{c^2+2}$,
  \end{enumerate}
  where $f$ is a computable function and $\largeteams_c \coloneqq \max_{j}|\{i \in [t]\colon a_{i}^{j}> L^{c}\}|$ is the maximum over the numbers of teams~$t_i$ in the coalition that have true availabilities~$a_i^{j}$ with $a_i^{j}>L^c$ for the same time slot~$c_j$. 
\end{theorem}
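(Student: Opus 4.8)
The plan is to reduce the search for an improvement step to a bounded collection of integer linear feasibility problems and then invoke integer programming in fixed dimension. By \cref{one_is_enough} it suffices to look for an improvement step that concentrates on a single time slot, so I would loop over all $m$ candidate slots $c_k$ (note $m\le L$, since every entry of $A$ costs at least one bit). For a fixed $c_k$, the reduction to \thresholdnumber recalled before \cref{thm:unary-p} turns the question into: is there $(x_1,\dots,x_t)\in\mathds{N}^t$ with $1\le x_i\le a_i^k$ and $x_i/(p+\sum_j x_j)>p_i$, where $p_i\coloneqq\payoff(B,t_i)$ and $p\coloneqq\sum_{i\in[n]\setminus[t]}b_i^k$? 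As observed in the Remark, clearing the rational denominators rewrites each pay-off constraint as $x_i-p_i\sum_j x_j>p_i\,p$, a linear inequality in the $t$ variables; together with the box constraints this is an integer linear feasibility instance in dimension $t$ whose coefficients and right-hand sides have encoding length $O(L)$ (the pay-offs $p_i$ are computed once from $B$ in $\poly(L)$ time).

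For part~1 I would feed each of these instances directly to Lenstra's algorithm in the form sharpened by Kannan and by Frank and Tardos: integer feasibility in dimension $d$ with bit-size $s$ is decidable in $d^{2.5d+o(d)}\cdot O(s)$ time. Here $d=t$ and $s=O(L)$, so one slot costs $t^{2.5t+o(t)}\cdot O(L)$; multiplying by the $m\le L$ choices of $c_k$ and adding the one-time $\poly(L)$ pay-off computation yields the claimed $O(t^{2.5t+o(t)}\cdot L^2)$ bound. Whenever a slot's instance is feasible, the returned witness is, after setting the coalition's other entries to $0$ as in \cref{one_is_enough}, the desired improvement step.

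For part~2 the goal is to make the dimension of the integer program $\largeteams_c$ rather than $t$. Fix $c_k$ and split the coalition into the large teams, those with $a_i^k>L^c$ (at most $\largeteams_c$ of them by definition), and the small teams, whose declared value is confined to the polynomially bounded range $\{1,\dots,L^c\}$. I would keep only the large teams as genuine integer variables and eliminate the small ones, exploiting the structure already used for \cref{thm:unary-p}: once the total declared sum is fixed, every constraint decouples into a box constraint, so the small teams interact with the rest only through their aggregate contribution, which ranges over a set of size at most $t\cdot L^c$. Running Lenstra on the residual program over the $\le\largeteams_c$ large variables costs $f(\largeteams_c)$ for a computable $f$, and sweeping the bounded small-team information contributes the polynomial factor; collecting these over the $m\le L$ slots gives a bound of the stated form $f(\largeteams_c)\cdot t^2\cdot L^{c^2+2}$.

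The main obstacle is precisely this elimination of the small teams. Because the pay-offs $p_i$ are rational, the minimum feasible value of a small team at a fixed total sum $S$ is a rounded linear function $\lfloor p_i(p+S)\rfloor+1$ of $S$, so one cannot simply substitute the small variables away and retain a linear program over the large variables: the coupling between the unknown total $S$ and these floors is nonlinear. Handling it requires enumerating the polynomially many ``rounding regimes'' in which the small teams' minima stay constant --- each small team contributes at most $a_i^k\le L^c$ breakpoints as $S$ grows --- so that within one regime the residual problem over the large teams is a genuine linear integer program to which Lenstra applies. Bounding the number of regimes and the resulting bit-size carefully is where the exact exponent $c^2+2$ and the factor $f(\largeteams_c)$ emerge, and it is the only place where work beyond invoking the cited integer-programming machinery is needed.
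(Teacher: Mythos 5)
Your proposal follows essentially the same route as the paper's proof: \cref{one_is_enough} plus the reduction to \thresholdnumber, then Lenstra--Kannan--Frank--Tardos in dimension $t$ for the first bound, and for the second bound a split into large and small teams, enumeration of the small teams' aggregate contribution, and a fixed-dimension integer program over the large teams only. You are in fact more explicit than the paper on one point: where the paper merely states that the small teams' declared values can be ``calculated as in the first algorithm'' once the large teams' values are known, you correctly flag that the minima $\lfloor p_i(p+S)\rfloor+1$ couple nonlinearly to the unknown total and must be handled by enumerating the polynomially many rounding regimes (which is possible precisely because a small team's minimum is only relevant while it does not exceed $L^c$).
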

The proof for \cref{thm:binary-unbounded} can be found in \cref{proof:binary-unbounded}.
 
Using \cref{thm:binary-unbounded},
and checking all $\sum_{i=1}^{t}\binom{n}{i}$ possible coalitions of size at most~$t$,
we obtain the following.

\begin{corollary}\label{cor:poly}
  Given an input and a strategy profile, 
  we can find,
  in polynomial time,
  a coalition of a constant number of teams and,
  for this coalition,
  find an improvement step compared to the given profile.
\end{corollary}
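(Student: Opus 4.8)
The plan is to reduce the problem to the fixed-coalition case already solved by \cref{thm:binary-unbounded}, via a brute-force enumeration over all sufficiently small coalitions. Since the number of teams allowed in a coalition is bounded by a constant~$t$, the key observation is that there are only polynomially many candidate coalitions to inspect, so we can afford to try each one in turn.

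First I would enumerate every subset of teams of size at most~$t$. The number of such subsets is $\sum_{i=1}^{t}\binom{n}{i} = O(n^t)$, which is polynomial in the number~$n$ of teams whenever $t$ is a constant. For each enumerated coalition, I would invoke the algorithm of \cref{thm:binary-unbounded} (using, say, the first running-time bound $T = O(t^{2.5t+o(t)}\cdot L^2)$) to decide whether that particular coalition admits an improvement step with respect to the given strategy profile, and, if so, to produce one. Because $t$ is fixed, the factor $t^{2.5t+o(t)}$ is a constant, so each individual invocation runs in time $O(L^2)$, polynomial in the length~$L$ of the binary encoding of the input.

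Combining the two steps, the overall running time is $O(n^t)\cdot O(L^2)$, which is polynomial in the input size for constant~$t$. Correctness is immediate: a coalition of at most $t$ teams with an improvement step exists if and only if at least one of the enumerated subsets is reported as admitting one by the subroutine of \cref{thm:binary-unbounded}; in that event the subroutine simultaneously hands us the witnessing improvement step, so we can output both the coalition and the step.

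I do not anticipate a genuine obstacle, as the statement is essentially a bookkeeping combination of \cref{thm:binary-unbounded} with a search space of polynomial size. The only point demanding care is confirming that the per-coalition subroutine is truly polynomial once~$t$ is a constant, which is exactly what the first bound of \cref{thm:binary-unbounded} delivers after the constant factor depending on~$t$ is absorbed into the $O$-notation.
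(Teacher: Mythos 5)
Your proposal matches the paper's argument exactly: the paper likewise obtains this corollary by enumerating all $\sum_{i=1}^{t}\binom{n}{i}$ coalitions of size at most the constant~$t$ and invoking the algorithm of Theorem~\ref{thm:binary-unbounded} on each. The running-time accounting and correctness reasoning you give are the intended ones, so nothing further is needed.
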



In general,
however,
deciding whether an improvement step exists is
computationally intractable
as the next result shows.
We briefly note that,
under standard complexity assumptions,
a problem being $\wtwo$-hard for parameter~$k$ 
presumably excludes any algorithm with running time~$f(k)\cdot |I|^{O(1)}$,
where $f$ is a computable function depending only on $k$
and $|I|$ is the size of the input.

\begin{theorem}\label{thm:cvc}
  Given an input and a strategy profile, 
  deciding whether there is a coalition of size~$t$ that has an
  improvement step is $\wtwo$-hard wrt.~$t$
  even if all teams are of size one.
  It remains $\np$-complete if there
  is no restriction on the coalition size.
\end{theorem}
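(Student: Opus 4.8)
The plan is to establish both hardness claims through a single parameterized reduction from \SetCover, which is well known to be \wtwo-complete when parameterized by the number of sets in a solution and \np-complete when that number is part of the input. Membership in \np{} is immediate once we invoke \cref{one_is_enough}: a witnessing improvement step may be assumed to concentrate the whole coalition on a single time slot, so a certificate consisting of the at most $t$ coalition members together with that slot can be verified in polynomial time by recomputing the winners and the pay-offs. Thus it remains to provide the reduction.

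Given a universe $U = \{u_1,\dots,u_m\}$, a family $S_1,\dots,S_n$ of subsets (each, without loss of generality, a proper subset of $U$, since a set equal to $U$ makes the instance trivial), and a bound $k$, I would build a \TPG{} input in which every team has size one (so all availabilities lie in $\{0,1\}$, i.e.\ $\amax = 1$) and set the coalition bound to $t \coloneqq k$. I introduce one \emph{element slot} $c_j$ per element $u_j$ and one distinguished \emph{target slot} $c^*$. For each set $S_i$ I create a \emph{set team} that is available exactly at the element slots with $u_j \in S_i$ and at $c^*$; in the current profile $B$ this team declares $1$ at every such $c_j$ and $0$ at $c^*$. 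Writing $D \coloneqq \max\{t, \max_j |\{i : u_j \in S_i\}|\}$, I then add \emph{dummy teams}, each available at and declaring $1$ for a single slot, so that in $B$ every element slot has total $D$ and $c^*$ has total $D-t$. With these totals the element slots are precisely the co-winners of $B$, and every team has relative power $0$ at some co-winner, so every team has pay-off $0$.

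Because all current pay-offs are $0$, a coalition improves if and only if it can turn some slot into the \emph{unique} winner while every member declares a positive value there, and \cref{one_is_enough} lets me assume it concentrates on one slot. If that slot is $c^*$, only set teams can usefully contribute, each raising the $c^*$-total by one while withdrawing support from its element slots; a coalition of exactly $t$ set teams raises the target to total $D$, and $c^*$ strictly beats every element slot exactly when each element slot has lost at least one unit, i.e.\ the chosen sets cover $U$. The same covering condition arises if the coalition concentrates on an element slot $c_j$: all members must then be sets containing $u_j$, and $c_j$ becomes the unique winner precisely when the coalition additionally drives every other element slot below $D$, which again forces a cover of $U$. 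In the forward direction, a cover of size at most $k$ can be padded to exactly $t = k$ set teams (using $n \ge t$, which we may assume since the case $k \ge n$ is decidable directly) to realise an improvement step on $c^*$; conversely, any improvement step yields a set cover of size at most $t$. Since the coalition bound equals the \SetCover{} parameter, this gives \wtwo-hardness with respect to $t$, while leaving $t$ unbounded inherits \np-hardness, and together with the membership above this yields \np-completeness.

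The main obstacle I anticipate is the direction that forbids unintended improvements: I must rule out every coalition---including ones mixing set teams with dummy teams, ones aiming at an element slot, and ones of size strictly below $t$---from sneaking an improvement without a genuine cover. The key observations are that a dummy can never lower the total of the targeted slot (so including dummies only consumes coalition budget without help) and that a coalition of size less than $t$ merely demands higher covering multiplicity, hence is harder rather than easier; in every case a successful deviation forces the set teams involved to form a cover of size at most $t$. Pinning down the arithmetic of the base totals $D$ and $D-t$ so that \emph{covered} coincides exactly with \emph{strictly below the target's total} is the delicate step, and it is what makes the two directions match precisely.
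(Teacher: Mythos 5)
Your \wtwo-hardness argument follows essentially the same route as the paper (a parameterized reduction from \SetCover with one slot per element, a distinguished target slot, and dummy teams calibrating the column sums so that a coalition of $t=\SCk$ set teams can make the target the unique winner exactly when it covers the universe), and that part looks sound: with the coalition capped at $t$, the target slot reaches total at most $D$, so beating every element slot forces a cover, and coalitions aimed at an element slot or containing dummies also reduce to covers.

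The genuine gap is the claim that the unrestricted-coalition case ``inherits'' \np-hardness from the same construction. In your instance every team starts with pay-off $0$, so \emph{any} deviation that yields positive pay-off for all members is an improvement step. Take any $r=\SCk+1$ set teams (no covering property needed), let each declare $1$ at $c^*$ and $0$ elsewhere: $c^*$ reaches total $D-\SCk+r=D+1>D$, hence is the unique winner, and all $r$ members jump from pay-off $0$ to $1/(D+1)$. So your reduced instance is a yes-instance of the unrestricted problem regardless of whether a set cover of size $\SCk$ exists, and the reduction proves nothing for that case. This is precisely why the paper's construction adds a slot $\alpha$ that is the unique winner of the initial profile: there the set teams start with \emph{positive} pay-off $1/(2\SCm)$, so an oversized coalition targeting $\beta$ pushes $\beta$'s total to at least $2\SCm$ and dilutes everyone's pay-off to at most the starting value $1/(2\SCm)$ --- oversized coalitions are self-defeating. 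Even that is not enough on its own: to exclude oversized coalitions that target element slots, the paper additionally restricts to \SetCover instances in which each element occurs in at most three sets (via \textsc{Vertex Cover} with maximum degree three) and assumes $\SCk\ge 3$. To repair your proof you would need an analogous mechanism --- either a nonzero baseline pay-off that caps the useful coalition size, or a structural restriction on the \SetCover instance --- rather than relying on the zero-pay-off baseline, which is exactly what makes large coalitions trivially successful.
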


\begin{proof}(Sketch).
%
%
To show $\wtwo$-hardness, we provide a parameterized reduction from the \SetCover problem,
which is $\wtwo$-complete wrt.\ the set cover size~$\SCk$~\cite{DF13}:
Given sets $\SCS=\{S_1,\dots,S_{\SCm}\}$ over a universe~$\SCU=\{u_1,\dots,u_{\SCn}\}$ of elements and a positive integer~$\SCk$,
\textsc{Set Cover} asks whether
there is a size-$\SCk$ \emph{set cover}~$\SCS' \subseteq \SCS$, i.e., $|\SCS'|= \SCk$ and $\bigcup_{S_i\in \SCS'}S_i=\SCU$.
The idea of such a parameterized reduction is,
given a \SetCover instance~$(\SCS,\SCU,\SCk)$,
to produce,
in $f(\SCk)\cdot (|\SCS|+|\SCU|)^{O(1)}$ time,
an equivalent instance~$(A,B,t)$ such that $t\le g(\SCk)$,
where $f$ and $g$ are two computable functions.
%
%
Let $(\SCS,\SCU,\SCk)$ denote a \SetCover instance.
For technical reasons, 
we assume that each set cover contains at least three sets.

\noindent \textbf{Time slots.}\quad
For each element~$u_j\in\SCU$, we create one \emph{element slot}~$e_j$.
Let $E:=\{e_1,\dots,e_{\SCn}\}$ denote the set containing all element slots.
We create two special time slots: $\alpha$ (the original winner) and~$\beta$ (the potential new winner).

\noindent \textbf{Teams and true availabilities~$A=(a^j_i)$.}\quad
For each set~$S_i \in \SCS$,
we create a \emph{set team}~$t_i$ that has true availability~$1$ at time slot~$\alpha$, 
at time slot~$\beta$, 
and at each element slot~$e_j$ with~$u_j \in S_i$.
We introduce several dummy teams, as follows.
Intuitively,
the role of these dummy teams is to allow to set specific sums of availabilities for the time slots;
the crucial observation in this respect is that the dummy teams do not have any incentive to change their true availabilities,
therefore we can assume that they do not participate in any coalition.
For each element~$u_j$,
let $\#(u_j)$ denote the number of sets from~$\SCS$ that contain~$u_j$.
For each element slot $e_j$,
we create $(2\SCm - 1 - \#(u_j))$ dummy teams such that each of these dummy teams has availability~$1$ at element slot~$e_j$
and availability~$0$ for all other time slots.
Similarly,
for time slot~$\alpha$, 
we create $\SCm$~additional dummy teams,
each of which has availability~$1$ for time slot~$\alpha$
and availability~$0$ for all other time slots.
For time slot~$\beta$,
we create $2\SCm-1-\SCk$ further dummy teams,
each of which has availability~$1$ for time slot~$\beta$
and availability~$0$ for all other time slots.

\noindent \textbf{Declared availabilities~$B=(b^j_i)$.}\quad
Each dummy team declares availability for the time slot where it is available.
Each set team declares availability for all time slots where it is available except for time slot~$\beta$ where all set teams declare availability~$0$.

We set the size of the coalition $t$ to be $k$.
This completes the reduction which can be computed in polynomial time.
Indeed, it is also a parameterized reduction.
%
%
%
A formal correctness proof as well as the extension
to the case of unrestricted coalition sizes to show the $\np$-hardness result are deferred to \cref{proof:thm-cvc}.
\end{proof}

Taking a closer look at the availability matrix constructed in the proof of \cref{thm:cvc}, 
we observe the following.

\begin{corollary}
Deciding the existence of an improvement step
for any coalition is $\np$-hard, even for very sparse availability matrices, i.e.,  
even if each team has only one team member 
and is truly available at no more than four time slots.
\end{corollary}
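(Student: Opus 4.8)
The plan is to reuse the reduction from the proof of \cref{thm:cvc} and to feed it a \SetCover instance whose sets are all small. Inspecting the availability matrix produced there, every dummy team is available at exactly one time slot, while a set team~$t_i$ is available precisely at~$\alpha$, at~$\beta$, and at the $|S_i|$ element slots corresponding to the elements of~$S_i$; hence $t_i$ has exactly $|S_i|+2$ nonzero true availabilities, and every entry lies in~$\{0,1\}$, so $\amax=1$. Consequently, if the input sets all satisfy $|S_i|\le 2$, then every team is available at at most four time slots. First I would verify that the correctness argument behind \cref{thm:cvc} never exploits largeness of the sets, so that the equivalence ``some coalition admits an improvement step $\iff$ the \SetCover instance has a set cover of the prescribed size'' survives the restriction verbatim.

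It then remains to pin down an $\np$-hard family of \SetCover instances with small sets, and here lies the main obstacle: with sets of size at most two the covering problem collapses to (a mild generalisation of) \textsc{Edge Cover}, which is polynomial-time solvable, so naively shrinking the sets destroys hardness. The natural fix is to reduce instead from \SetCover restricted to sets of size at most three---equivalently, from \textsc{Exact Cover by 3-Sets}, or from \textsc{Vertex Cover} on cubic graphs, both $\np$-hard---for which the above equivalence still holds. This yields teams that are available at at most $3+2=5$ time slots.

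To tighten the bound from five to four I would economise on the two ``global'' slots $\alpha$ and $\beta$, and this is the step I expect to be the crux. The two slots play genuinely different roles: $\alpha$ is the current unique winner and is exactly what grants every set team a strictly positive current pay-off, which in turn upper-bounds the size of any improving coalition; $\beta$ is the common deviation target onto which the coalition concentrates (as licensed by \cref{one_is_enough}), and whose rise is what realises the improvement. The difficulty is to fuse these two functions into a single slot without collapsing the reduction: if a set team occupies only one global slot, then either it has current pay-off zero, so that even the grand coalition improves and the instance becomes a trivial yes-instance, or it has no common slot to deviate to, so that no coalition can improve and the instance becomes a trivial no-instance. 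I would therefore look for a gadget that supplies the current-winner role through auxiliary teams sharing a single slot with the set teams, while keeping that same slot usable as the deviation target via a carefully tuned tie among co-winners; checking that such a gadget still forces the improving coalitions to be precisely the size-$\SCk$ covers is exactly where the argument becomes delicate.
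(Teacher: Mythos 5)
Your reconstruction of the paper's intended argument is accurate up to the final constant. The corollary is indeed meant to be read off the availability matrix built in the NP-hardness part of the proof of Theorem~\ref{thm:cvc}: every entry there is $0$ or $1$ (so each team has one member), every dummy team is available at exactly one slot, and every set team~$t_i$ is available at exactly $|S_i|+2$ slots ($\alpha$, $\beta$, and its element slots). The paper's NP-hardness argument restricts to \textsc{Set Cover} instances in which each element occurs in at most three sets, realized via \textsc{Vertex Cover} on graphs of maximum degree three; in that encoding the sets have size at most three, so your bound of $3+2=5$ available slots per set team is exactly what the construction yields, and your observation that shrinking the sets to size at most two collapses the covering problem to a polynomial-time edge-cover variant is also correct.

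The gap is that you never establish the bound of four that the statement actually claims. Your proposed remedy---fusing $\alpha$ and $\beta$ into a single slot---is only a plan: you yourself identify why the naive fusion fails (a set team with a single global slot either starts with pay-off zero, trivializing the ``yes'' direction, or has no common deviation target, trivializing the ``no'' direction), and you supply neither a concrete gadget nor a correctness argument for one. So as written the proposal proves the corollary only with ``five'' in place of ``four''. It is worth noting, however, that the paper itself offers nothing beyond ``taking a closer look'' at the same matrix, which by your own (correct) count supports five rather than four; the gadget you flag as the crux is absent from the paper as well, so your write-up is a faithful and complete derivation of what the paper's construction actually gives.
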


\section{Nash equilibria}\label{section:results_nash_equilibria}

We move on to consider the existence of Nash equilibria.
Somewhat surprisingly, it seems that,
a $1$-strong Nash equilibrium always exists.
Unfortunately, we can only prove this 
when the maximum availability $\amax \coloneqq \max_{i \in [n], j \in [m]} a_i^j$
is at most three.
Extending our proof strategy to $\amax\ge4$ seems to require a huge case analysis.

\begin{theorem}\label{thm:simple_Nash_always_exists}
  If the maximum availability $\amax$ is at most three, then
  \TPG always admits a $1$-strong Nash equilibrium.
\end{theorem}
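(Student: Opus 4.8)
The plan is to exhibit a $1$-strong Nash equilibrium explicitly and then verify, via \cref{one_is_enough}, that no single team can deviate profitably. The first step is to record the characterization of single-team improvements implied by \cref{one_is_enough}: a team~$t_i$ can improve against a profile~$B$ if and only if there is a time slot~$c_k$ such that declaring its true availability~$a_i^k$ at~$c_k$ (and zero everywhere else) makes~$c_k$ the \emph{unique} winner with relative power $a_i^k/(a_i^k+R)$ exceeding its current pay-off~$\payoff(B,t_i)$, where $R=\sum_{l\neq i}b_l^k$. Two facts drive this: since $x\mapsto x/(x+R)$ is increasing, a deviating team should always declare the maximum~$a_i^k$; and since it declares zero outside~$c_k$, it gets positive power only if~$c_k$ wins outright, because any remaining co-winner leaves it with power~$0$ there.

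The second step is to pin down the shape of the target equilibrium. In any profile with a \emph{unique} winner~$c^*$, a team that does not already declare truthfully at~$c^*$ can raise its declaration there: this keeps~$c^*$ the unique winner (its total only grows) while strictly increasing its relative power, hence its pay-off. Therefore a unique-winner equilibrium must be truthful at~$c^*$, which fixes the winning sum $W=\sum_i a_i^{c^*}$ and the pay-offs $p_i=a_i^{c^*}/W$. This is why I would target unique-winner profiles that are truthful at the winner, reducing the construction to two choices: which slot is~$c^*$, and what each team declares at the remaining ``decoy'' slots.

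The third step is the construction itself, which is also the promised algorithm. I would try each slot as~$c^*$ (for instance in order of decreasing total availability), make every team truthful there, and then choose decoy declarations $d_i^k\le a_i^k$ at each $c_k\neq c^*$ so as to defeat every deviation to~$c_k$. For a team~$t_i$ that is weak at~$c^*$ but strong at~$c_k$, the decoys at~$c_k$ are kept small, so that after~$t_i$ abandons~$c^*$ the slot~$c^*$ (now of total $W-a_i^{c^*}$) still beats~$c_k$ and~$t_i$ cannot win there at all. For a team that is dominant at~$c^*$, positive decoys are placed at~$c_k$ to dilute its attainable power $a_i^k/(a_i^k+\sum_{l\neq i}d_l^k)$ down to at most~$p_i$. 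The claim to be proven is that when $\amax\le 3$ one can always satisfy all of these finitely many requirements simultaneously.

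The main obstacle is exactly this final feasibility step, because the requirements couple across slots: raising a decoy at~$c_k$ both weakens some deviations and strengthens~$c_k$ as a competitor, so it may enable a \emph{different} team's deviation or even flip the global winner; moreover the slot a deviating team must beat is the larger of $W-a_i^{c^*}$ and the decoy totals of the other slots, and this quantity itself varies with~$i$. Since every availability lies in $\{0,1,2,3\}$, the attainable relative powers $x/(x+R)$ and the winning margins take only finitely many small values, so these competing requirements can be resolved by a bounded case distinction; this is precisely the analysis that explodes once availabilities may reach~$4$. I would organize it by the multiset of winner-values~$\{a_i^{c^*}\}$ of the potentially dominant teams together with their availabilities at the contested slot, dispatching the few dangerous configurations one at a time, and I would keep a two-co-winner (tie) profile in reserve for any residual case where no single slot can be stabilized as a unique winner. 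I expect the bulk of the proof's length to lie here.
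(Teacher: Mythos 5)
Your framework is sound and matches the paper's: by \cref{one_is_enough} a single-team deviation may be assumed to concentrate on one slot and to declare the maximum there, every Nash equilibrium with a unique winner $c^*$ must be truthful at $c^*$, and the construction therefore reduces to picking $c^*$ and tuning decoy declarations at the other slots. But the theorem's entire content lies in the feasibility step you defer, and the reason you give for why it closes --- that the relative powers and margins ``take only finitely many small values'' --- is not valid as stated: the number of teams and of time slots is unbounded, so the winning column sum $W$, the competitor totals $R$, and hence the attainable powers $a_i^k/(a_i^k+R)$ range over infinitely many values even when $\amax\le 3$. Asserting ``a bounded case distinction'' without identifying what bounds it is exactly where the proof is missing; indeed the paper's own Remark~\ref{rem:1-NE} notes that the same plan already breaks down at $\amax=4$, so feasibility cannot follow from boundedness of the entries alone.

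The missing structural idea is the paper's notion of \emph{safe} slots. If some slot has only one available team whose availability is maximal over all teams, or if some slot $c_j$ satisfies $a^{j'}_i\le\sum_{i'\neq i}a^j_{i'}$ for every team $t_i$ and slot $c_{j'}$ (in particular, whenever the maximum column sum $x$ is at least $2\amax$), then declaring truthfully at that slot and zero elsewhere is already a $1$-strong Nash equilibrium, with no decoy tuning needed. This confines the genuinely hard instances to those where the winning column sum lies in $\{\amax,\dots,2\amax-1\}$, so the winning column has at most two or three nonzero entries with totals at most $5$ when $\amax=3$; only then does an explicit finite case analysis (on $x\in\{3,4,5\}$ and the composition of the winning column, which is what the paper carries out, including the decoy choices that keep every other slot's declared total low enough) become available. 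Without isolating this reduction your plan has no termination guarantee, and your fallback of a two-co-winner profile is a further red flag: at a tie every team absent from one co-winner has pay-off zero, and the paper never needs such profiles. As written, the proposal is a correct setup plus a promissory note for the part that constitutes the proof.
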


\begin{proof}(Sketch).
 Let $A=(a^j_i)$ be the input profile.
 We begin by characterizing two simple cases for which $1$-strong Nash equilibria always exist.
 
 \medskip\noindent\textbf{Safe single-team slot.} Suppose that a time slot $c_{j}$ exists where only
 one team, $t_{i}$, is available with some availability~$a^*$
 (i.e., $a^{j}_{i}=a^*$),
 all other teams are not available in this time slot
 (i.e., $a^{j}_{i'}=0$ for all $i' \neq i$), and
 no other team, $t_{i'}, i' \neq i$, is available with availability greater than~$a^*$
 at any time slot.
 Then, we obtain a $1$-strong Nash equilibrium~$B=(b^l_i)$ by setting
 $b_i^j \coloneqq a_i^j$,
 and,
 for each $i' \in [n]$ and each $j' \neq j$,
 setting $b_{i'}^{j'} \coloneqq 0$;
 to see why $B$ is a Nash equilibrium,
 notice that the only team (namely $t_i$) i.e. available at time slot~$c_j$ 
 already receives the best possible pay-off (namely $1$)
 and no other team can prevent~$c_j$ from being a co-winner,
 which would be necessary to improve their pay-off (which is $0$).
 We call such time slot~$c_j$ a \emph{safe single-team~slot}.

 \smallskip\noindent\textbf{Safe multiple-team slot.} Suppose that a time slot~$c_j$ exists where multiple teams
 have non-zero true availabilities and no single team is powerful enough to 
 prevent~$c_j$ from co-winning, by declaring zero availability.
 That is, for each team~$t_i$ and each time slot~$c_{j'} \neq c_j$,
 it holds that $a^{j'}_i \leq \sum_{i'\neq i} a^j_{i'}$.
 Again, we obtain a $1$-strong Nash equilibrium~$B=(b^j_i)$ by setting
 $b_l^j \coloneqq a_l^j$ for each team~$t_l$
 and setting $b^{j'}_l \coloneqq 0$
 for each other time slot~$c_j \neq c_{j'}$.
 We call such time slot~$c_j$ a \emph{safe multiple-team slot}.
 For example, the following input profile contains two safe multiple-team slots, namely $c_1$
 and $c_4$:

{
\centering
$
  A\coloneqq \ \ \begin{blockarray}{ccccc}
    c_1 & c_2 & c_3 & c_4 \\
    \begin{block}{(cccc)c}
      1 & 2 & 0 & 0 & \ \ t_1 \\
      2 & 0 & 2 & 0 & \ \ t_2 \\
      1 & 0 & 0 & 1 & \ \ t_3 \\
      0 & 1 & 1 & 3 & \ \ t_4 \\
    \end{block}
  \end{blockarray}
$\par}

We are ready to consider instances $\amax \le 3$.

\smallskip\noindent\textbf{Instances with $\amax = 2$.}
%
 Consider the maximum availability sum~$x$ of all time slots,
 i.e., the maximum column sum of the matrix~$A$.
 Clearly, $x \ge \amax$.
 We proceed by considering the different possible values of $x$.

\smallskip\noindent\textit{Cases with $x = 2$:} 
 If $x$ is two, then there is a time slot where only one
 team is available with availability $\amax=2$.
 Thus,
 there is a safe single-team slot.

\smallskip\noindent\textit{Cases with $x = 3$:} 
 If $x$ is three,
 then,
 since we have $\amax=2$, it follows that either
 (1) there is a safe single-team slot where only one team is available with availability $\amax=2$
 or
 (2) there is a time slot~$c_j$ where a single team~$t_i$ has availability $\amax=2$ and another team~$t_{i'}$ has availability $1$.
 In the first case,
 there is a safe single-team,
 so let us consider the second case.
 To this end,
 let $c_j$ be the time slot such that a single team~$t_i$ has availability $\amax=2$ and another team~$t_{i'}$ has availability $1$.
 Next,
 we show how to construct a $1$-strong Nash equilibrium~$B=(b^j_i)$. 
 First, for each team~$t_i$, set $b_{i}^j\coloneqq a^j_i$ and $b_{i}^{j'}\coloneqq 0, j'\neq j$.
 This makes time slot $c_j$ the unique winner.
 Team~$t_i$ receives pay-off~$2/3$
 and team $t_{i'}$ receives pay-off~$1/3$.
 Second, 
 for each time slot~$c_{j'} \neq c_j$,
 if $a^{j'}_{i'} > 0$, then set~$b^{j'}_{i'}\coloneqq 1$;
 otherwise, find any team~$t_k \neq t_i$ with non-zero availability $a^{j'}_{k} = 1$
 and set~$b^{j'}_{k}\coloneqq 1$.
 In this way, every time slot except~$c_j$ has total availability one (if there is at least one team with non-zero availability for this slot).
 Thus, $c_j$ remains a unique winner and the declared total availabilities of other time slots make it impossible for any team to improve:
   First,
   team~$t_i$ cannot improve because it would receive the same pay-off~$2/3$ for every time slot which it could make a new single winner
   (recall that no safe single-team slot exists).
   Second,
   team~$t_{i'}$ also cannot improve 
   because it cannot create a new single winner at all.
   Last,
   neither of the remaining teams can improve because they cannot prevent $c_j$ from co-winning.
 Hence, we have a $1$-strong Nash equilibrium.

\smallskip\noindent\textit{Cases with $x \geq 4$:} 
 Every time slot~$c_j$ with availability sum~$x$ is a safe multiple-team slot
 since $\forall j'\colon a^{j'}_i \le \amax=2$
 and
 $\forall i\colon \sum_{i'\neq i} a^j_{i'}\ge x-\amax \ge 2$.
\medskip

We defer the proof details for instances with $\amax = 3$ to Appendix~\ref{proof:simple_Nash_always_exists}.
\end{proof}

\begin{remark}\label{rem:1-NE}
We do not know 
any instances without $1$-strong Nash equilibria.
However,
we could not generalize our proof even for instances with $\amax = 4$.
Nevertheless, some general observations from our proof hold for every $\amax$.
In particular, if there is a column with only one entry with~$\amax$ (a special case of a safe single-team slot)
or if the maximum column sum is at least~$2\amax$
(a special case of a safe multiple-team slot), then
a $1$-strong Nash equilibrium exists.
\end{remark}

Since our proof is constructive, we obtain the following.

\begin{corollary}
  If the maximum availability $\amax$ is at most three, then
  a $1$-strong Nash equilibrium for \TPG can be found in polynomial time.
\end{corollary}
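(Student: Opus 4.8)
The plan is to read off a polynomial-time algorithm directly from the constructive proof of \cref{thm:simple_Nash_always_exists}. That proof never merely asserts existence: in each case it writes down an explicit strategy profile~$B=(b^j_i)$ whose entries arise from elementary rules—copy the true value~$a^j_i$, set an entry to~$0$, or set an entry to~$1$. So the whole task reduces to showing that (a)~given an input~$A$ with $\amax\le 3$, one can identify in polynomial time which case of the proof applies, and (b)~once the case is fixed, the prescribed profile~$B$ can be assembled in polynomial time. Correctness then comes for free, since \cref{thm:simple_Nash_always_exists} already certifies that the~$B$ produced in each branch is a $1$-strong Nash equilibrium.

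For step~(a) I would test the cases in the same order the proof does. Detecting a \emph{safe single-team slot} means scanning each column for one with exactly one non-zero entry, say of value~$a^*$, and checking that no entry of~$A$ exceeds~$a^*$; this is a few passes over the $n\times m$ matrix. Detecting a \emph{safe multiple-team slot} means testing, for each column~$c_j$, whether $a^{j'}_i \le \sum_{i'\neq i} a^j_{i'}$ holds for every team~$t_i$ and every other column~$c_{j'}$, again a polynomial number of matrix inspections. If no safe slot is found, the proof branches on the value $\amax\in\{1,2,3\}$, on the maximum column sum~$x$, and on small local patterns (e.g.\ a column carrying one team with availability~$2$ and another with availability~$1$). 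Each of these quantities is computed by summing columns or reading individual entries, all in time polynomial in~$nm$; the $\amax=1$ subcase is immediate, since either the unique maximal column is a safe single-team slot or its sum reaches $2\amax$, giving a safe multiple-team slot.

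For step~(b), in every branch the construction of~$B$ is a finite list of assignment rules over the entries of~$A$: fix one distinguished column~$c_j$ and copy the relevant true availabilities there, zero out the other columns for the participating teams, and—in the more delicate branches—restore a single unit of availability in each remaining column by locating any team with a non-zero entry there. Locating such a team is one column scan, and there are at most~$m$ columns, so writing down~$B$ costs $O(nm)$ time. Combining steps~(a) and~(b) yields a running time polynomial in the size of~$A$.

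The only genuine obstacle is bookkeeping rather than mathematics: the $\amax=3$ analysis (deferred to the appendix) supplies most of the branches, and one must confirm that \emph{each} branch is governed by a test on $\amax$, on column sums, or on a bounded local availability pattern, so that there are only constantly many branches and each is decidable and constructible in polynomial time. Granting this structural fact—which holds because every case distinction in the proof is phrased purely in those terms—the corollary follows.
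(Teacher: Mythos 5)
Your proposal is correct and matches the paper's reasoning: the paper derives this corollary purely from the observation that the proof of \cref{thm:simple_Nash_always_exists} is constructive, which is exactly the argument you spell out in more detail. Your elaboration of how each case is detected and how the profile~$B$ is assembled in polynomial time is a faithful (and slightly more explicit) rendering of the same approach.
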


The situation where $t \ge 2$ is quite different
already with only two teams.
By a proof similar to the case of $t=1$ and $\amax=2$,
we can show that a $2$-strong Nash equilibrium always
exists for $t = 2$ and $\amax=1$:

\begin{proposition} \label{prop:two_Nash_always_exists}
  If the maximum availability $\amax$ is one, then
  a $2$-strong Nash equilibrium for \TPG always exists and can be found in polynomial time.
\end{proposition}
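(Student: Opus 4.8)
The plan is to mimic the proof of \cref{thm:simple_Nash_always_exists} for $\amax=2$, now constructing a profile that withstands coalitions of size one \emph{and} two. Since $\amax=1$, every declared value is $0$ or $1$, so by \cref{one_is_enough} I may assume that an improving coalition of size at most two puts all its mass on a single slot~$c_k$, declaring~$1$ there and~$0$ elsewhere; for a member to end with positive pay-off, $c_k$ must become the \emph{unique} winner and the member must truly be available at~$c_k$. This yields one structural fact I would lean on repeatedly: in a deviation of a size-two coalition both members declare~$1$ at a common slot, so that slot's total is at least~$2$ and each member's new pay-off is at most~$1/2$. Hence a team whose current pay-off is already at least~$1/2$ can never belong to an improving pair, and a pair can only improve through a slot at which both members are available.

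Let $x$ denote the maximum column sum of~$A$ (the largest number of simultaneously available teams). I would split on~$x$. If $x\ge 4$, I concentrate everyone on a maximum slot~$c_k$ (declare truthfully there, $0$ elsewhere): a pair removes at most~$2$ from~$c_k$, leaving it at total $\ge x-2\ge 2$, whereas any fresh slot reaches only~$2$, so a pair can at best tie and never create a strictly better unique winner, while single deviations are weaker still. If some column has sum exactly~$2$, I make that slot the unique winner with its two teams declaring~$1$ and everyone else~$0$: these two teams get the maximal coalition pay-off~$1/2$ (so they neither join an improving pair nor improve alone), while every other team can push a slot only to total~$2$, merely tying the untouched winner. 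If $x=1$, no two teams ever share a slot, so by the structural fact no size-two coalition can act at all, and any $1$-strong Nash equilibrium (which exists and is easy to build) is automatically $2$-strong. Since $x=2$ forces a sum-$2$ column, the only remaining case is $x=3$ with no column of sum~$2$.

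The remaining case $x=3$ with no sum-$2$ column (all column sums in $\{0,1,3\}$) is the main difficulty. Here a sum-$3$ winner gives its three teams only pay-off $1/3<1/2$, making them tempting targets. I would pick a sum-$3$ slot $c_k=\{t_a,t_b,t_c\}$ as the unique winner but set the other teams' declarations carefully. The one genuinely dangerous move is a pair of \emph{two} winner teams sharing a second slot~$c'$: by emptying~$c_k$ to total~$1$ and occupying~$c'$, they could form a sum-$2$ unique winner and jump to~$1/2$. Every other coalition is automatically dead: a pair with exactly one winner team drops~$c_k$ only to~$2$, so its target can at best tie (or reach total~$3$) and the displaced winner team never beats its current~$1/3$; a pair of non-winner teams cannot lower~$c_k$ below~$3$, so its target never strictly wins. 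The plan is thus to neutralise every shared second slot~$c'$ of two winner teams---each such~$c'$ has true sum~$3$ and hence contains a third team---either by keeping a third team declaring~$1$ at~$c'$, which pins~$c'$ at total~$3$ so the pair would only regain~$1/3$, or, when that is impossible, by arranging for a further slot to survive at total $\ge 2$ so the pair's sum-$2$ target merely ties and yields pay-off~$0$.

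The crux, and the step I expect to be the main obstacle, is proving that such a consistent assignment always exists: one must simultaneously pin or neutralise every dangerous pair while keeping~$c_k$ the strict unique winner and never handing a non-winner team an improving move. This is precisely where the argument becomes a case analysis on how the sum-$3$ columns overlap (shared winner pairs, disjoint triples, and the like), paralleling the heavy casework of the $\amax=2$ proof. Once the equilibrium profile is exhibited, polynomial-time computability is immediate: there are only polynomially many candidate winner slots, the declarations are fixed by simple local rules, and by \cref{one_is_enough} the resulting profile can be certified against all coalitions of size at most two in polynomial time---it suffices to test, for each slot and each single team or pair available there, whether the associated single-slot deviation strictly improves everyone involved.
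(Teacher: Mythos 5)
You have correctly reproduced the skeleton that the paper itself only gestures at (the paper's entire ``proof'' of \cref{prop:two_Nash_always_exists} is the remark that it is ``similar to the case of $t=1$ and $\amax=2$''), and your dispatching of the cases $x\ge 4$, ``some column sums to $2$'', and $x=1$ is sound: the key observation that a pair deviation caps each member's pay-off at $1/2$, and that a team already at $1/2$ can never belong to an improving pair, does all the work there. But the proof as written is incomplete: you explicitly leave the case $x=3$ with all column sums in $\{0,1,3\}$ as ``the main obstacle'' and do not carry out the consistency argument. Moreover, your fallback branch for that case (``arranging for a further slot to survive at total $\ge 2$'') is the wrong move --- a non-co-winning slot with declared total $2$ is itself a new threat, since a team available there can push it to $3$ and become the unique winner, so this repair cascades rather than terminates.

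The gap can be closed, and essentially by the first of your two devices applied uniformly. Fix a sum-$3$ slot $c_1$ with support $W=\{t_a,t_b,t_c\}$. Let all of $W$ declare $1$ at $c_1$ \emph{and at every other slot whose support is exactly $W$} (these duplicate triples become co-winners at total $3$; this is the correct treatment of the sub-case where the ``third team'' of a shared slot is itself in $W$, where a single guard fails against the complementary pair). For every slot containing exactly two teams of $W$, its third team --- necessarily outside $W$, since nonzero columns have exactly $1$ or $3$ available teams --- declares $1$ there; all other declarations are $0$. Then every co-winner has total $3$ and every other slot has total at most $1$, so: a pair inside $W$ that targets a shared slot lands on total $3$ (the guard or the third member of $W$ is still there) and regains only $1/3$; a pair with exactly one member of $W$ drops the co-winners only to $2$ while its target reaches at most $2$ (if the target is guarded it reaches $3$, but then the $W$-member gets exactly $1/3$, not a strict improvement); a pair outside $W$ reaches at most $2<3$; and single deviations are weaker still. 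Since this profile is built by inspecting columns and their pairwise overlaps with $W$, and certifying it against all $O(n^2 m)$ single-slot deviations of coalitions of size at most two is routine by \cref{one_is_enough}, the polynomial-time claim follows. In short: right decomposition, right identification of the dangerous coalitions, but the decisive case is asserted rather than proved, and the sketch of its resolution needs the correction above.
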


Complementing~\cref{thm:simple_Nash_always_exists},
we demonstrate that $2$-strong Nash equilibria do not always exist,
even when $\amax = 2$;
to this end,
consider the following example.




\begin{example}\label{example:Nash_not_always_exists}
We provide in the following an example with maximum true availability two,
and show that it does not admit a $1$-strong Nash equilibrium.
Consider the following input for \TPG.

{\centering
  $  A\coloneqq  \begin{blockarray}{ccc}
    c_1 & c_2 & \\ [3px]
    \begin{block}{(cc)c}
      2 & 0 & t_1 \\
      2 & 2 & t_2 \\
      0 & 2 & t_3 \\
    \end{block}
  \end{blockarray}
$\par}

Informally, 
\noindent The main crux of this example is that $t_1$
(or, symmetrically, $t_3$)
can cooperate with $t_2$;
in such cooperation,
$t_2$ can choose whether to be `in favor` of $t_1$ or $t_3$,
by declaring either $b_1^2 = 2$ and $b_2^2 = 0$ (favoring $t_1$),
or $b_1^2 = 0$ and $b_2^2 = 2$ (favoring $t_3$).
Moreover,
$t_1$ or $t_3$ can `reward' $t_2$ by not declaring its true availability,
which is $2$,
but only $1$.
In such  a cooperation,
both $t_2$ and $t_1$ (or $t_2$ and $t_3$) strictly improve their pay-offs.

More formally,
let us consider all possibilities for $t_2$.
By symmetry,
we can assume without loss of generality,
that $t_2$ declares either $(0,1)$, $(0,2)$, $(1,1)$, $(1,2)$, or $(2,2)$
(declaring~$(0,0)$ is not possible in a Nash equilibrium---see the analog $(1,1)$ case)).

To this end,
we denote the declared availability matrix~$B$
simply by the symbolic vector $[b^1_1 b^2_1,b^1_2 b^2_2,b^1_3 b^2_3]$
(e.g.\ the declared availability matrix~$B=A$ is denoted by $[20,22,02]$).
We write $B \to^X B'$ if the coalition~$X$ receives a better pay-off
by changing the declared availabilities from~$B$ to $B'$.

Now, we consider each of the possible strategies that team~$t_2$ may take:
\begin{description}
 \item[$t_2$ declares $(0, 1)$:]\ \\
  $[x0,01,0y] \to^{\{t_3\}} [x0,01,02]$ $(0\le x \le 2 ,0\le y \le 1)$;
  $[x0,01,02] \to^{\{t_2\}} [x0,02,02]$ $(0\le x \le 2)$. 
 \item[$t_2$ declares $(0,2)$:] \ \\
  $[x0,02,0y] \to^{\{t_3\}} [x0,02,02]$ $(0\le x \le 2 ,0\le y \le 1)$;\\
  $[x0,02,02] \to^{\{t_1,t_2\}} [10,20,02]$ $(0\le x \le 2)$.
 \item[$t_2$ declares $(1, 1)$:]\ \\
  $[x0,11,0y] \to^{\{t_2\}} [x0,20,0y]$ $(x>y)$;\\
  $[x0,11,0y] \to^{\{t_2\}} [x0,20,0y]$ $(x=y)$;\\
  $[x0,11,0y] \to^{\{t_2\}} [x0,02,0y]$ $(x<y)$.
 \item[$t_2$ declares $(1,2)$:] \ \\
  $[x0,12,0y] \to^{\{t_3\}} [x0,12,02]$ $(0\le x \le 2,0\le y \le 1)$;\\
  $[x0,12,02] \to^{\{t_2\}} [x0,20,02]$ $(0\le x \le 1)$;\\
  $[20,12,02] \to^{\{t_1,t_2\}} [10,20,02]$.

 \item[$t_2$ declares $(2,2)$:]\ \\
 $[x0,22,0y] \to^{\{t_1\}} [20,22,0y]$ $(0\le x \le 1,0\le y \le 1)$;\\
 $[20,22,02] \to^{\{t_1,t_2\}} [10,20,02]$;\\
 $[20,22,01] \to^{\{t_2\}} [20,02,01]$;\\
 $[10,22,02] \to^{\{t_2\}} [10,20,02]$.
\end{description}

The above case analysis cover (by symmetry) all possible cases,
and thus,
shows that there are no $2$-strong Nash equilibria for the instance~$A$.
\end{example}
Naturally,
there are profiles for which $t$-strong Nash equilibria do exist:
  consider, for example, $A$ being the all-one matrix.
%


Next, we show that if the coalition size is unbounded, 
then finding a Nash equilibrium becomes $\coNP$-hard.

\begin{theorem}\label{thm:strongNash-coNP-hard}
  Deciding whether a Nash equilibrium exists for a given input is $\coNP$-hard.
\end{theorem}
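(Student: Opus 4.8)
The plan is to prove $\coNP$-hardness by showing that the complementary problem is $\np$-hard: given an input $A$ and an (unbounded) coalition bound $t$, decide that \emph{no} $t$-strong Nash equilibrium exists. I reduce from \SetCover, reusing and augmenting the availability gadget built in the proof of \cref{thm:cvc}. The target equivalence will be that a $t$-strong Nash equilibrium exists if and only if the \SetCover instance is a \textbf{no}-instance; since \SetCover is $\np$-complete (even for unbounded coalitions in the present encoding, where the cover size is enforced numerically rather than by the coalition bound), this gives the claim. Throughout, all declared and true availabilities stay in $\{0,1,2\}$, so $\amax = 2$ as required.

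First, I would recall the instance of \cref{thm:cvc}: the element slots $E=\{e_1,\dots,e_\SCn\}$, the two distinguished slots $\alpha$ and $\beta$, one set team per set, and the dummy teams that pin down the column sums. In the canonical profile $B^*$ — every dummy truthful, every set team truthful except declaring $0$ at $\beta$ — the slot $\alpha$ is the unique winner, and a coalition has an improvement step with respect to $B^*$ exactly when a \SetCover solution of the prescribed size exists (the crucial point, already exploited in \cref{thm:cvc}, is that the dummy counts at $\beta$ make the required cover size a purely numeric condition, so the statement survives when coalitions may be arbitrarily large). Thus $B^*$ is itself a Nash equilibrium iff the instance is a \textbf{no}-instance.

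The substance of the proof is to ensure that $B^*$ is the \emph{only} possible equilibrium, so that its (in)stability decides the whole instance. To this end I would bolt on a \emph{poison gadget} patterned after \cref{example:Nash_not_always_exists}: three fresh teams over two fresh slots with true availabilities $(2,0),(2,2),(0,2)$, scaled by additional dummies so that the poison slots are strictly dominated — never (co-)winning — as long as the main part sits in the canonical configuration, but become (co-)winning the moment $\alpha$ is dethroned or extra mass is shifted onto $\beta$. Because that gadget admits no $2$-strong (hence no $t$-strong) Nash equilibrium, any profile in which a poison slot (co-)wins necessarily admits an improvement step. Using \cref{one_is_enough} to restrict attention to single-slot deviations, I would then run a case analysis over the possible winner sets of an arbitrary profile $B$: in every case other than $B=B^*$, either the poison gadget is armed and fires, or some set or dummy team has a beneficial single-slot deviation toward $\alpha$ or $\beta$, so $B$ is not an equilibrium.

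Putting the pieces together: if no cover of the prescribed size exists, then $B^*$ is stable and keeps the poison slots dominated, so $B^*$ is a $t$-strong Nash equilibrium; if such a cover exists, then the improvement step away from $B^*$ shifts mass onto $\beta$, arming the poison gadget, after which no profile can be stable, so no $t$-strong Nash equilibrium exists. The main obstacle is precisely this propagation/uniqueness step — proving that instability of the main part forces \emph{global} instability, i.e.\ that no off-canonical profile (with an unusual winner set, or with partial declarations I have not anticipated) can sneak in as an equilibrium. Controlling all such profiles while keeping every availability in $\{0,1,2\}$, and calibrating the dummy counts so the poison slots flip at exactly the right threshold, is the delicate part of the construction; the remaining correctness bookkeeping is routine given \cref{one_is_enough} and the non-equilibrium analysis of \cref{example:Nash_not_always_exists}.
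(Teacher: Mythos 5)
Your approach is genuinely different from the paper's, and it has a real gap at exactly the point you flag as ``the delicate part.'' The paper does not reduce from \SetCover, does not reuse the gadget of \cref{thm:cvc}, and does not attempt to make any single profile the unique candidate equilibrium. It reduces from the complement of \textsc{Restricted X3C}: one slot per element (availability $n$ for the three covering teams), one extra slot $\alpha$ (availability $1$ for everyone), coalition bound $2n$. In the ``no exact cover'' direction the equilibrium exhibited is \emph{not} a canonical/truthful profile but the profile in which every team declares $n$ on its element slots and $0$ on $\alpha$, so that all $3n$ element slots tie and every team has pay-off $0$; stability follows because dethroning all element slots in favour of $\alpha$ requires decreasing each element column by enough, which forces the deviating coalition of size $\le 2n$ to hit every element at least twice --- i.e.\ forces an exact cover. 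In the ``exact cover exists'' direction the paper runs a case analysis over \emph{all} profiles (by winner set: $\alpha$ wins alone, $\alpha$ co-wins, $\alpha$ loses) and exhibits an improvement step in each case. So the burden of ``no profile whatsoever is stable'' is discharged by direct case analysis on the winner structure, not by funnelling everything into one canonical profile.

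The concrete gap in your plan is the poison gadget. First, the calibration is self-defeating: if the poison slots are tuned to become co-winners ``the moment extra mass is shifted onto $\beta$,'' then the very deviation that is supposed to witness the improvement step in the yes-case makes the poison slots co-win, the deviating set teams get relative power $0$ there, and under the maximin pay-off their deviation is no longer an improvement --- so the forward direction of your equivalence breaks. Second, the claim that ``any profile in which a poison slot co-wins admits an improvement step, because the isolated gadget has no $2$-strong NE'' does not transfer: the case analysis in \cref{example:Nash_not_always_exists} exhaustively uses the fact that there are only two slots and three teams, so every team can unilaterally or jointly flip the winner. Embedded in a larger instance, whether a poison team can make its preferred slot a unique winner depends on the column sums of all the other slots; a poison team can easily be stuck at pay-off $0$ with no feasible deviation, in which case the armed gadget does \emph{not} fire and an off-canonical equilibrium can survive. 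Since the entire coNP-hardness argument rests on proving global instability in the yes-case, and your sketch neither resolves the calibration conflict nor supplies the case analysis over arbitrary winner sets, the proof as proposed does not go through.
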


\begin{proof}(Sketch).
We reduce from the complement of the following $\np$-complete problem~\cite{G84}: \textsc{Restricted X3C}, which given sets $\mathcal{F}=\{S_1, \ldots, S_{3n}\}$, each containing exactly $3$ elements from $E = \{e_1, \ldots, e_{3n}\}$ such that (1) $n \geq 2$ and  (2) each element $e_i$ appears in exactly $3$ sets,
asks 
whether there is a size-$n$ \emph{exact cover}~$\mathcal{F'}\subseteq \mathcal{F}$, i.e., $|\SCS'|= n$ and $\bigcup_{S_i\in \SCS'}S_i=\SCU$.


Given an instance~$(\mathcal{F}, E)$ of the complement of \textsc{Restricted X3C} we construct a game.
For each element $e_i$ ($i \in [3n]$) we construct a time slot $e_i$.
We construct one additional time slot, denoted by $\alpha$.
For each set~$S_j$ ($j \in [3n]$) we construct a team $s_j$.
For a team $s_j$,
we set its availability for time slot $e_i$, namely $a_i^j$,
to be $n$
if $e_i \in S_j$,
and otherwise $0$.
We set the availability of all teams to be $1$ in time slot $\alpha$.
We consider $2n$-strong Nash-equilibria; thus, we consider coalitions containing up to $2n$ teams.
This finishes the description of the polynomial-time reduction.
The correctness proof can be found in \cref{proof:strongNash-coNP-hard}.
\end{proof}

\section{Conclusion}\label{section:outlook}

We introduced a game considering power of teams (referred to as \TPG) that is naturally motivated by online scheduling polls
where teams declare and update their availabilities in a dynamic process to increase their relative power. 
Our work leads to several directions for future research.


\smallskip
\noindent \textbf{Tie-breaking rules:}
In this paper the teams are pessimistic,
i.e.,
in case of several co-winners,
the pay-off is defined as the \emph{minimum} of the relative number, over the co-winners.
This corresponds to situations where ties are broken adversarially.
We chose this tie-breaking as a standard and natural one,
and as one which models teams which are pessimistic in nature,
where having too low power in the team might have very bad consequences.
Naturally,
one might study other tie-breaking rules 
such as breaking ties uniformly at random or breaking ties lexicographically;
we mention that most of our results seem to transfer to lexicographic tie-breaking.

\smallskip
\noindent\textbf{More refined availability constraints:}
In the online scheduling polls considered in this paper,
the availability constraints expressed by the participants are dichotomous:
each participant can only declare either ``available'' or ``not available'' at each time slot.  
Sometimes,
the availability constraints of people participating in scheduling polls
are more fine-grained;
  for example,
  a participant might not be sure whether she is available or not for some of the suggested time slots,
  but can only provide a ``maybe available'' answer for these time slots.
Correspondingly,
it is interesting to study \TPG
when we allow participants to express more refined availability constraints,
maybe even allowing them to fully rank the time slots according to their constraints.

\smallskip
\noindent\textbf{Nash modification problem:}
Taking the point of view of the poll convener
(who desires to reach a Nash equilibrium),
we suggest to study the following problem:
  given an input for \TPG,
  what is the minimum number of time slots that shall be removed
  so that the modified input will have a Nash equilibrium?

\newcommand{\bibremark}[1]{}

\ifcompileappendix

\clearpage
\appendix

\section{Missing Proofs}\label{apx}
Below we provide proofs missing from the main text.

\subsection{Proof of Lemma~\ref{one_is_enough}}
\label{proof:one_is_enough}
\begin{proof}
Suppose that,
for a certain coalition,
$D=(d^j_i)$ is an improvement step for $B=(b^{j}_i)$.
Let $c_{j_1}$ and $c_{j_2}$ be two time slots such that at least one team~$t_i$ in the coalition has $d^{j_1}_{i} \neq 0$
and at least one team~$t_{i'}$ (possibly different) in the coalition has $d^{j_2}_{i'}\neq 0$.

We distinguish between two cases.
If 
$c_{j_1}\notin \cowinners(D)$ (or~$c_{j_2} \notin \cowinners(D)$),
then the pay-offs of all teams remain the same even if each team~$t_i$ in the coalition with non-zero declared availability~$d^{j_1}_{i}$ (or~$d^{j_2}_{i}$)
for~$c_{j_1}$ (or~$c_{j_2}$) will declare zero instead.
Otherwise,
both~$c_{j_1}$ and $c_{j_2}$ are co-winners in~$D$. 
Fix any team $t_i$ in the coalition.
Since strategy~$D$ is an improvement step compared to $B$,
it follows that 
the relative powers of $t_i$ in $c_{j_1}$ and in $c_{j_2}$ for strategy~$D$
are both strictly larger than the pay-off of $t_i$ for strategy~$B$. 
Thus,
we can change the improvement step~$D$ to make all teams in the coalition declare zero availability, say, in $c_{j_2}$,
and we would still obtain an improvement step.
By repeating the above reasoning,
we fix a desired improvement step where all teams in the coalition have non-zero availability only in the same single time slot.
\end{proof}

\subsection{Proof of Theorem~\ref{thm:unary-p}}
\label{proof:unary-p}
\begin{proof}
Following~\cref{one_is_enough},
we begin by guessing the unique time slot $c_k$ for which the teams in the coalition will declare non-zero availability.

Further,
we guess the value of~$w\coloneqq\sum_{i=1}^{t} x_i$.
Crucially,
this value is upper-bounded by $\sum_{i=1}^{t}a_{i}^{j} \leq s n$;
recall that $s$ is the sum of all true availabilities and $n$ is the number of teams.
Then,
for each team~$t_i$ in the coalition,
we compute the minimum value needed for $t_i$ to get strictly better pay-off than $\payoff(B,t_i)$,
i.e., \linebreak
$\min\{x\le a_{i}^{j} \mid x / (p + w) > \payoff(B,t_i)\}$,
where $p$ is the sum of the declared availabilities of the teams not in the coalition;
we pick these values as the $x_i$'s.
If as a result we obtain $\sum_{i=1}^{t}x_i \le w$,
then we return $(x_1, \ldots, x_t)$.
Otherwise,
we proceed to the next guess.
If all guesses fail, then we reject.
The running time is $O(s^2)$.
\end{proof}

\subsection{Proof of Theorem~\ref{thm:binary-unbounded}}
\label{proof:binary-unbounded}
\begin{proof}
Following~\cref{one_is_enough},
we begin by guessing the unique time slot $c_k$ for which the teams in the coalition will declare non-zero availability.

After guessing the unique time slot $c_k$ for which the teams in the coalition would declare non-zero availabilities,
we can run the integer linear program (ILP, in short) specified for the \thresholdnumber problem. 
By the famous result of \citet{Len83} (later improved by~\citet{Kan87} and by~\citet{FraTar1987}), 
we know that 
an ILP with $\rho$~variables and $z$ input bits can be solved in $O(\rho^{2.5\rho+o(\rho)}\cdot z)$ time.
Since the ILP specified for \thresholdnumber 
has $t$ variables and can be represented in $O(|A'|\cdot |B'|)$ bits,
where $|A'|$ and $|B'|$ denote the numbers of binary bits needed to encode the true and the declared availabilities at time slot~$c_j$, respectively,
we obtain an algorithm with running time $O(t^{2.5t+o(t)} \cdot L^2)$.

As for the second running time,
after guessing the unique time slot $c_k$,
we additionally guess the sum~$w'$ of declared availabilities of the teams in the coalition whose true availabilities are upper-bounded by $L^c$;
we call these teams \emph{small teams}.
Then, we modify the ILP specified for \thresholdnumber 
to search for the declared availabilities of the remaining teams.
Using the declared availabilities for the remaining teams,
we can calculate the declared availabilities for the small teams
as described in the first algorithm.
Again, by Lenstra's result, we can solve the ILP in $g(\largeteams_c)\cdot |A_j'|\cdot |B_j'|$~time, 
where $|A_j'|$ and $|B_j'|$ denote the numbers of binary bits needed to encode the true and declared availabilities at time slot~$c_k$, respectively.
Searching for the declared availabilities for the small teams can be done in $O(t^2\cdot L^{c^2})$ time.
In total, we obtain an algorithm with running time~$f'(\largeteams_c)\cdot t^2\cdot L^{c^2+2}$. 
%
%
%
\end{proof}

\subsection{Proof of Theorem~\ref{thm:cvc} \\ (correctness, unrestricted coalition size)}\label{proof:thm-cvc}

\begin{proof}
Let us state the total availabilities of the various time slots.
Each element slot has total availability~$2\SCm-1$,
time slot~$\beta$ has total availability~$2\SCm-1-\SCk$,
and time slot~$\alpha$ has total availability~$2\SCm$.
Indeed,
time slot~$\alpha$ is a unique winner.
Each set team receives a pay-off of~$1/(2\SCm)$
and each dummy team receives pay-off zero.

Next, we prove the correctness of the reduction.
For the ``if'' case,
assume that there is a size-$k$ set cover $\SCS'\subseteq\SCS$.
We show that the coalition that corresponds to the set cover~$\SCS'$ 
(recall that $t = k$) can improve by making time slot~$\beta$ the unique winner.
To this end,
each set team~$t_i$ with $S_i \in \SCS'$ changes its declared availability for time slot~$\alpha$ and for all element slots to~$0$
and changes its declared availability for time slot~$\beta$ to~$1$.
As a result,
time slot~$\alpha$ has total availability~$2\SCm-k$,
each element slot has total availability at most~$2\SCm-2$
(since the coalition corresponds to a set cover),
and time slot~$\beta$ has total availability~$2\SCm-1$.
Thus,
time slot~$\beta$ is the unique winner and
each set team receives a pay-off of~$1/(2\SCm-1)$;
this is a strict improvement for all teams in the coalition.

For the ``only if'' case,
assume that there is a coalition of $t$~teams that can improve their pay-offs by changing their declared availabilities.
We observe that time slot~$\alpha$ cannot be a (co-)winner since if it was,
then either no team would improve or the pay-off of at least one coalition member would be zero.
Now, we show that the subfamily~$\SCS'$ corresponds to the coalition is a set cover of size~$t=k$.
For this, we distinguish between two cases, depending on whether time slot~$\beta$ is a unique winner or not.

First,
suppose that the coalition makes time slot~$\beta$ become the new unique winner.
Then,
after changing the coalition's declared availabilities,
time slot~$\beta$ has total availability of at most~$2\SCm-1$.
It follows that each element time slot has total availability of at most~$2\SCm-2$.
This implies that $\SCS'$,
which corresponds to the set teams of the coalition,
forms a set cover,
since otherwise there would be one element slot for which the coalition cannot decrease its total availability to be at most~$2\SCm-2$.

Next,
suppose that $\beta$ is not a unique winner which means that there is some subset~$E'\subseteq E$ of element slots within the co-winner set
(including the case that some element slot~$e_j$ is a unique winner).
Note that
if the coalition contains only one team~$t^*$,
then time slot~$\alpha$ would still be a (co-)winner which is not possible by the arguments above. 
Thus,
let us assume that the coalition has at least two teams that make all element slots from~$E'$ become (co-)winners.
All coalition members must still declare availability~$1$ for all element slots from~$E'$,
since otherwise the pay-off of some coalition members would decrease to zero. 
Furthermore,
for each element slot~$e_j\notin E'$ that is not a co-winner,
there is at least one coalition member that changes its declared availability from~$1$ to~$0$,
since otherwise $e_j$~would be a co-winner.
Hence, subfamily~$\SCS'$ is a set cover (recall that it corresponds to the teams in the coalition):
  each element corresponding to an element slot from~$E'$
  is covered by all sets from $\SCS'$
  and
  each element corresponding to an element slot from~$E \setminus E'$
  is covered by at least one set from~$\SCS'$.

By the analysis of the two cases above,
the existence of a coalition that can improve always implies the existence of a set cover.
This completes the proof for the case that the size of the coalition is at most~$t$.

\medskip\noindent\textbf{Unrestricted coalition size.}\quad
Our problem is in $\np$ as we can check in polynomial-time whether 
a specific strategy is an improvement step for a given coalition.
If there is no restriction on the coalition size,
then NP-hardness does not immediately transfer from the above construction,
but it can be obtained by a slight modification as follows.

The above proof would almost work through:
  a size-$k$ set cover in the original instance still implies a coalition of size~$t$ that can improve by making~$\beta$ be the new unique winner.
  Furthermore,
  by the same arguments as above,
  it still follows that every coalition that can improve still consists of set teams which correspond to a set cover.
We cannot,
however,
exclude the existence of a coalition which corresponds to a set cover of size larger than~$t$.
To fix this,
we assume that each element in our \SetCover instance occurs in at most three sets.
(This variant remains NP-hard since \textsc{Vertex Cover} remains NP-hard even if each vertex has degree at most three~\cite{GJS76};
we did not assume this in the $\wtwo$-hardness proof since this variant is not $\wtwo$-hard.)
We further assume, without loss of generality, that~$k\ge3$.

The reasoning for the restricted coalition size case still holds,
so it only remains to show that no coalition of size larger than~$t$ can improve.
Assume,
towards a contradiction,
that there is a coalition of size at least~$t+1$ that can improve.
First,
notice again that time slot~$\alpha$ cannot be a (co\nobreakdash-)winner.
Second,
assume that there is some subset~$E'\subseteq E$ of element slots within the set of co-winners
(including the case that some element slot~$e_j$ is a single winner).
All coalition members must still declare availability~$1$ for all element slots from~$E'$,
since otherwise the pay-off of some team member would decrease to zero.
However,
this is not possible since each element occurs in at most three sets and $t+1=k+1>3$.
Third,
assume that the coalition makes time slot~$\beta$ be the new unique winner.
Then,
after changing the coalition's declared availabilities,
time slot~$\beta$ has total availability of at least~$2\SCm$
since otherwise some coalition members would decrease their pay-off.
However,
this also means that no coalition member improves its pay-off,
which is at most the same as the original pay-off~$1/(2\SCm)$.
This completes the proof for the case of unrestricted coalition size.
\end{proof}

\subsection{Proof of Theorem~\ref{thm:simple_Nash_always_exists} (instances with $\amax \neq 2$)}\label{proof:simple_Nash_always_exists}
\begin{proof}

\medskip\noindent\textbf{Instances with $\amax = 1$.}
 Considering the above two cases,
 the proof is relatively simple for inputs with $\amax=1$:
 either there is a column~$j$ in~$A$ where only one team has availability~$1$,
 implying that time slot~$c_j$ is a safe single-team slot,
 or every time slot~$c_j$ is a safe multiple-team slot since
 $\forall j'\colon a^{j'}_i \le \amax=1$ and
 $\forall i\colon \sum_{i'\neq i} a^j_{i'}\ge 1$.

 \medskip\noindent\textbf{Instances with $\amax = 3$.}
 We consider the case where the maximum true availability~$\amax$ is three.
 Again, let $x$ be the maximum sum of availabilities over all time slots,
 and notice that $x \geq \amax$.
 If $x$ is at least six, then every time slot~$c_j$ with
 availability sum~$x$ is a safe multiple-team slot, because
 $\forall j'\colon a^{j'}_i \le \amax=3$ and
 $\forall i\colon \sum_{i'\neq i} a^j_{i'}\ge x-\amax \ge 3$.
 If $x$ is three, then there is a time slot where only one
 team is available with availability $\amax=3$, i.e.,
 there is a safe single-team slot.

 Now, assume that $x$ is four or five.
 We distinguish between four cases and implicitly assume
 that the $k$th case does not hold in the $(k+1)$th case.
 First, there is a safe single-team slot
 (which implies a $1$-strong Nash equilibrium by our observation).
 Note that this case includes time slots where only one team
 is available with availability $\amax=3$
 as well as the situation that there is a time slot~$c_j$
 where only one team~$t_i$ is available with availability~$2$
 and~$t_i$ is the only team with availability~$2$ for every time slot.
 
 Second, there is a time slot~$c_j$ where one team~$t_i$
 has availability $\amax=3$ and another team~$t_{i'}$
 has availability one while every remaining team has availability zero.
 Analogously to the case with $\amax=2$ and $x=3$,
 first, for each team~$t_\ell$, we set $b_\ell^j\coloneqq a_\ell^j$ and $b_\ell^{j'}\coloneqq 0$ ($j'\neq j$
 to make time slot $c_j$ a single winner.
 Team $t_i$ receives pay-off~$3/4$
 and team $t_{i'}$ receives pay-off~$1/4$.
 Now, for each team~$t_\ell$ and for each time slot~$c_{j'}\neq c_j$, 
 we modify the declared availabilities~$b_{\ell}^{j'}$
 as follows.
 If there is some $\ell \notin \{i,i'\}$ with $a^{j'}_\ell=3$,
 then set~$b^{j'}_\ell\coloneqq a^{j'}_\ell$.
 Otherwise, if $a^{j'}_{i'} > 0$, then set~$b^{j'}_{i'}\coloneqq a^{j'}_{i'}$
 and if $a^{j'}_{i'} = 0$, then set~$b^{j'}_{\ell}\coloneqq a^{j'}_{\ell}$
 for the first position~$\ell \notin \{i,i'\}$ with $a^{j'}_{\ell}>0$.
 This does not prevent $c_j$ from being the unique winner but
 makes it impossible for the teams to improve their pay-offs:
 team~$t_i$ cannot improve, because it would receive
 at most the same pay-off~$3/4$ for every time slot which
 it could make a new single winner.
 Note that, since we are not in the first case,
  it follows that there is no time slot where only team~$t_i$
  is available with availability~$\amax=3$.
  Furthermore, if there is a time slot~$c_{j^*}$ where team~$t_i$
  is available with availability~$2$ and no other team is available,
  then there is some time slot $c_{j''}$
  with $a^{j''}_{i''}=3$ ($i''\neq i$) and, hence,
  $b^{j''}_{i''}=3$.
  (This slot~$c_{j''}$ must exist since otherwise we would be
   in the first case and have a safe single-team slot~$c_{j^*}$.)
  Thus, team~$t_i$ cannot make $c_{j^*}$
  become a new single winner.
 Team $t_{i'}$ also cannot improve, because it cannot
 create a new single winner at all.
 Hence, we have a $1$-strong Nash equilibrium.

 Third, there is a time slot~$c_j$ where one team~$t_i$
 has availability $\amax=3$ and another team~$t_{i'}$
 has availability two, implying that the remaining teams are not available at slot~$c_j$.
 Similarly to the previous case,
 for each team~$t_\ell$, we first set $b^j_\ell\coloneqq a_\ell^j$ and $b^{j'}_\ell\coloneqq 0$ ($j'\neq j$)
 to make time slot $c_j$ a single winner.
 Team $t_i$ receives pay-off~$3/5$
 and team $t_{i'}$ receives pay-off~$2/5$.
 Now, for each team~$t_\ell\neq t_i$ and for each time slot~$c_{j'}\neq c_j$ with $a^{j'}_i=\amax=3$, 
 we modify its declared availability~$b^{j'}_{\ell}$ 
 such that the total declared availability sum of the teams other than $t_i$ is always two at time slot~$c_{j'}$.
\begin{itemize}
  \item If $a^{j'}_{i'} \ge 2$, then set~$b^{j'}_{i'}\coloneqq a^{j'}_{i'}$.
  \item If $a^{j'}_{i'} = 1$, then set~$b^{j'}_{i'}\coloneqq a^{j'}_{i'}$
 and let $t_\ell\notin \{t_i,t_{i'}\}$ be the first team with $0<a^{j'}_{\ell}<\amax$
 and set~$b^{j'}_{\ell}\coloneqq a^{j'}_{\ell}$.
 Note that such a team~$t_\ell$ must exist since we are not in the second case.
 \item If $a^{j'}_{i'} = 0$, then
 set~$b^{j'}_{\ell}\coloneqq a^{j'}_{\ell}$
 for the first team~$t_\ell \notin \{t_i,t_{i'}\}$ with $a^{j'}_{\ell}>1$
 or for the first two teams~$t_\ell \notin \{t_i,t_i'\}$ with $a^{j'}_{\ell}=1$.
 Again, such teams~$t_\ell$ exist(s) since we are not in the second case.
 This does not prevent $c_j$ from being the unique winner but
 makes it impossible for the teams to improve their pay-offs:
 team~$t_i$ cannot improve, because it would receive
 at most the same pay-off~$3/5$ for every time slot which
 it could make a new (co-)winner.
 Each of the remaining teams (including $t_{i'}$) also cannot improve, because it cannot
 create a new single winner at all (note that $x=5$).
\end{itemize}
Hence, we have a $1$-strong Nash equilibrium.
 
 Fourth, there is a time slot~$c_j$ where one team~$t_i$
 has availability~$\amax=3$ and there are another two teams, $t_{i'}$ and~$t_{i''}$, both with availability one.
 Similarly to the previous case,
 for each team~$t_\ell$, 
 we first set $b^j_\ell\coloneqq a^j_\ell$ and $b^{j'}_\ell\coloneqq 0$ ($j'\neq j$)
 to make time slot~$c_j$ a single winner.
 Team $t_i$ receives pay-off~$3/5$
 and teams~$t_{i'}$ and~$t_{i''}$  both receive pay-off~$1/5$.
 Then, for each time slot~$c_{j'} \neq c_j$ such that $a^{j'}_i=\amax=3$,
 we modify the declared availabilities~$b^{j'}_\ell$ of some teams~$\ell$: 
 \begin{itemize}
   \item If $a^{j'}_{i'} > 0$ and $a^{j'}_{i''} > 0$,
   then set~$b^{j'}_{i'}\coloneqq 1$ and set~$b^{j'}_{i''}\coloneqq 1$.
   \item If $a^{j'}_{i'} = 0$ and $a^{j'}_{i''} > 1$,
   then set~$b^{j'}_{i''}\coloneqq a^{j'}_{i''}$. 
   \item If $a^{j'}_{i'} = 0$ and $a^{j'}_{i''} = 1$,
   then set~$b^{j'}_{i''}\coloneqq a^{j'}_{i''}$,
   let $t_\ell$ be the first team with $0<a^{j'}_{\ell}<\amax$, and set~$b^{j'}_{\ell}\coloneqq a^{j'}_{\ell}$.
   Note that such a team~$t_\ell$ must exist since we are not in the second case.
   \item If $a^{j'}_{i'} = 0$ and $a^{j'}_{i''} = 0$, then 
   let $t_\ell, t_{\ell'}$ be the first two teams with $a^{j'}_{\ell} = a^{j'}_{\ell'}=1$
   and set~$b^{j'}_{\ell}\coloneqq a^{j'}_{\ell}$.
   Note that such a team~$t_\ell$ must exist since we are not in the first three cases.
   The cases with $a^{j'}_{i''} = 0$ follow analogously.
   The new declared availabilities still make $c_j$ a unique winner but
   make it impossible for the teams to improve their pay-offs:
   team~$t_i$ cannot improve, because it would receive
   at most the same pay-off~$3/5$ for every time slot which
   it could make a new (co-)winner.
   Neither of the remaining teams (including~$t_{i'}$ and team~$t_{i''}$) 
   can improve, because
   they cannot create a new single winner at all.
 \end{itemize}
 Hence, we have a $1$-strong Nash equilibrium.
\end{proof}

\subsection{Proof of Theorem~\ref{thm:strongNash-coNP-hard} (correctness)}\label{proof:strongNash-coNP-hard}
We will use the following observation.

\begin{observation}\label{truth_for_winning_slot}
  Let $B$ be a Nash-equilibrium for some input $A$.
  Then,
  for each time slot $c_j \in \cowinners(B)$,
  all teams declare their true availabilities.
\end{observation}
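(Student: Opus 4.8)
The plan is to argue by contradiction: I assume that $B$ is a Nash equilibrium but that some team under-declares at a co-winning slot, and then exhibit a single-team improvement step that simply raises that one declaration, contradicting the equilibrium property (recall that a $t$-strong Nash equilibrium with $t\ge 1$ in particular forbids singleton coalitions from improving). Concretely, suppose there is a slot $c_j \in \cowinners(B)$ and a team $t_i$ with $b_i^j < a_i^j$. I would let $S \coloneqq \sum_{k} b_k^j$ denote the (maximum) declared column sum at $c_j$ and write $R \coloneqq S - b_i^j \ge 0$ for the contribution of the other teams, so that $S = R + b_i^j$.

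The deviation I would use is the simplest possible: team $t_i$ keeps all of its declarations unchanged except that it raises $b_i^j$ to its true value $a_i^j$, which is admissible since $a_i^j \le a_i^j$. First I would observe that this makes $c_j$ the unique winner: its new column sum is $R + a_i^j > R + b_i^j = S$, while every other column is untouched and thus still has sum at most $S$. Hence after the deviation $\cowinners = \{c_j\}$, and the new pay-off of $t_i$ equals its relative power there, namely $a_i^j / (R + a_i^j)$.

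Next I would compare this to the old pay-off. Since $c_j$ is a co-winner of $B$, the old pay-off of $t_i$, being a minimum of relative powers over all co-winners, is at most its relative power at $c_j$, i.e.\ $\payoff(B,t_i) \le b_i^j / S = b_i^j/(R + b_i^j)$. Because the map $x \mapsto x/(R+x)$ is strictly increasing in $x \ge 0$ for fixed $R \ge 0$, and $a_i^j > b_i^j$, I obtain $a_i^j/(R+a_i^j) > b_i^j/(R+b_i^j) \ge \payoff(B,t_i)$. Thus $t_i$ strictly improves by itself, contradicting that no singleton coalition has an improvement step; therefore $b_i^j = a_i^j$ for every team at every co-winning slot.

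The argument is short, and the only place needing care is the degenerate fully-zero situation, where $S=0$ and the relative-power fractions become $0/0$. I would dispose of this separately: if $S = 0$ at a co-winner then all declarations are zero, so $b_i^j = 0$, and raising it to $a_i^j > 0$ makes $c_j$ the unique winner and yields pay-off $a_i^j/(R+a_i^j) = 1$, which still strictly improves on the (zero or undefined) old value. Hence we may assume $S > 0$, which is exactly what makes the monotonicity comparison well defined, and I expect no substantial obstacle beyond this bookkeeping.
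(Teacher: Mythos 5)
Your proof is correct and takes essentially the same approach as the paper: the paper's own argument is the identical one-line contradiction (a team under-declaring at a co-winning slot can improve by declaring its true availability), which you merely spell out in full via the monotonicity of $x \mapsto x/(R+x)$ and the degenerate all-zero column case.
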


\begin{proof}
Assume, towards a contradiction, that the declared availability of some team $t_i$ in time slot $c_j$
is strictly less than its true availability.
Then,
this team can improve its payoff by declaring its true availability.
\end{proof}

Now we are ready to proof the correctness of the construction for Theorem~\ref{thm:strongNash-coNP-hard}.

\begin{proof}(of Theorem~\ref{thm:strongNash-coNP-hard}).
Now, we show that a $2n$-Nash-equilibrium exists if and only if an exact cover does not exist.

%
\medskip\noindent We show the ``only if'' part
by showing that existence of an exact cover implies non-existence of a Nash-equilibrium.
To this end, we assume that there is an exact cover and prove that a Nash-equilibrium does not exist.

Let $\mathcal{F}'$ be an exact cover. Consider the remaining sets $\bar{\mathcal{F}'} = \{S_1, \ldots, S_{3n}\} \setminus \mathcal{F}'$.
It holds that $|\mathcal{F}'| = n$ and $|\bar{\mathcal{F}'}| = 2n$.
Notice further that,
while $\mathcal{F}'$ covers each element exactly once,
$\bar{\mathcal{F}'}$ covers each element exactly twice.

We further assume,
towards a contraction, that a Nash-equilibrium does exist;
let us denote it by $B$.
In what follows, we consider several cases concerning the structure of profile~$B$, which is a Nash-equilibrium,
and show an improvement step for each of these cases, contradicting the assumption that $B$ is indeed a Nash-equilibrium.

Consider the set of winning time slots of $B$.
First we consider the case where $\alpha \in B$.

\smallskip\noindent\textbf{Case 1.}
In this case, $\alpha \in \cowinners(B)$.
By Observation~\ref{truth_for_winning_slot} we have that the declared availabilities of all teams in $\alpha$ is $1$;
thus, the sum of declared availabilities for all winning slots is $3n$.
Below,
we consider three different sub-cases.

\smallskip\noindent\textbf{Case 1a.}
In this case, $\alpha$ is the only winning time slot, i.e., $\alpha = \cowinners(B)$.
Consider any time slot, say $e_1$, and consider the three teams who are available in it,
say $s_i$, $s_j$, and $s_k$.
We claim that these teams can improve their pay-off, as follows:
  if $s_i$, $s_j$, and $s_k$, will declare availability of $0$ for all time slots except for $e_1$
  for which they will declare availability of $n$,
  then their pay-off would increase from $\frac{1}{3n}$ to~$\frac{1}{3}$.

\smallskip\noindent\textbf{Case 1b.}
In this case, there are at most three other time-slots, besides $\alpha$, with sum of declared availabilities being $3n$,
and there exists a team $s_i$ such that $a^i_j = n$ for each $j$ such that $E_j \in \cowinners(B)$.
We claim that $s_i$ can improve its pay-off, as follows:
  if $s_i$ will declare availability of $0$ for time slot $\alpha$, then its pay-off would increase from $\frac{1}{3n}$ to $\frac{1}{3}$.

\smallskip\noindent\textbf{Case 1c.}
In this case, $\alpha$ is indeed a winning slot, but neither Case 1a nor Case 1b hold.
It follows that,
for every team~$s_j$ there must exist $E_i \in \cowinners(B)$ such that $a^j_i = 0$.
Thus, $\payoff(s_j, B) = 0$ for each team $s_j$.
Recall that $\bar{\mathcal{F}'}$ contains those sets which are not part of the exact cover for the instance of \textsc{Restricted X3C}.
We claim that the set of teams corresponding to the sets in $\bar{\mathcal{F}'}$ can improve their pay-off as follows:
  if each such team will declare availability of~$0$ for all time slots $E_i$ and availability of~$1$ for $\alpha$,
  then their pay-offs would increase to $\frac{1}{3n}$.
This follows since $\bar{\mathcal{F}'}$ covers each element twice,
thus,
by deviating as described above,
the sum of declared availabilities of all time slots $E_i$ would decrease to being at most $n$,
while the sum of availabilities of $\alpha$ would be at least $2n$, making $\alpha$ a unique winner.

\medskip
Next we consider the case where $\alpha$ is not winning.

\medskip\noindent\textbf{Case 2.}
In this case, $\alpha \notin \cowinners(B)$.
By Observation~\ref{truth_for_winning_slot} we have that the declared availabilities in the winning time slots of all teams is $n$,
thus, the sum of declared availabilities for all winning time slots is $3n$.
Below,
we consider two sub-cases.

\smallskip\noindent\textbf{Case 2a.}
In this case, there exists a team $s_i$ such that $a^i_j = n$ for each $j$ such that $E_j \in \cowinners(B)$. 
Since in the instance of \textsc{Restricted X3C} we have that no two sets are the same,
it follows that there are four other teams, besides $s_i$, denoted by $s_j$, $s_f$, $s_p$, and $s_q$,
and two different time slots, denoted by $e_x$ and by $e_y$,
such that it the following conditions hold:

\begin{enumerate}
\item $a^i_x = a^j_x = a^f_x = n$;
\item $a^i_y = a^p_y = a^q_y = n$;
\item $j \notin \{p, q\}$;
\item $E_y \in \cowinners(B).$
\end{enumerate}

Notice that $\payoff(s_j,B) = 0$, $\payoff(s_i,B) = \frac{1}{3}$, and $\payoff(s_f, B) \leq \frac{1}{3}$.
We claim that $s_i, s_j, s_f$ can improve their pay-offs as follows:
  if $s_i, s_f$ will declare availabilities of $n$ for time slot $E_x$ and $0$ for all other slots 
  while $s_j$ will declare availability of $n - 1$ for time slot $E_x$ and $0$ for all other slots,
  then $E_x$ would become the unique winning time slot;
  as a result, the pay-offs of $s_i, s_f$ would increase to $\frac{n}{3n-1}$ while the pay-off of $s_j$ would increase to $\frac{n - 1}{3n-1}$.

\smallskip\noindent\textbf{Case 2b.}
In this case, no team exists i.e. available in all winning time slots.
In this case, the improvement step described in Case 1c is an improvement step, and the reasoning is the same, thus omitted.

\medskip\noindent For the ``if'' part, we have to that 
if no exact cover exists, then there is a Nash equilibrium. 
To this end, 
we will describe a profile $B$ and would argue that, if indeed an exact cover does not exist, then $B$ is a Nash-equilibrium.

The profile~$B$ is as follows.
Each team~$s_j$ declares availability of $n$ for all time slots $e_i$ and availability of $0$ for $\alpha$,
making each time slot a co-winner.
Thus, the pay-off of each team is $0$.
Let us assume, towards a contradiction, that $B$ is not a $2n$-Nash-equilibrium,
i.e., that an improvement step with respect to $B$, involving at most $2n$ teams, exists,
denoted by $B'$.

First of all, we observe that no time slot $e_i$ is a winning time slot (i.e., $e_i \notin \cowinners(B')$ for all $e_i$) because 
otherwise 
after the improvement step~$B'$,
at least $3n - 3$ teams will have pay-off $0$
(this follows since, in the instance of \textsc{Restricted X3C},
each element is included in exactly three sets).
However,
no coalition of at most $3$ teams can decrease the sum of availabilities of all other time slots as to make $e_i$ the unique winner;
thus, we conclude that, after the improvement step, $\alpha$ shall be the unique winning slot.

The maximum sum of availabilities which time slot $\alpha$ might get after an improvement step involving at most $2n$ teams is $2n$.
Therefore, to have that such a deviation is a profitable deviation,
i.e., an improvement step,
the sum of declared availabilities of all time slots $e_i$ has to decrease by at least $n + 1$.
This could happen only if, for each time slot $e_i$,
at least two teams with non-zero availabilities would decrease their declared availabilities.
Thus, an exact cover must exist.
\end{proof}

\fi

\end{document}